\let\csname equation*\endcsname\relax
\let\csname endequation*\endcsname\relax
\DeclareMathAlphabet{\mathmybb}{U}{bbold}{m}{n}
\newcommand{\1}{\mathmybb{1}}
\DeclareMathOperator*{\argmin}{argmin}
\newcommand{\bsJ}{\mathbf{J}}
\newcommand{\bsV}{\mathbf{V}}
\newcommand{\bsm}{\mathbf{m}}
\newcommand{\bsg}{\boldsymbol{g}}
\newcommand{\bsd}{\mathbf{d}}
\newcommand{\bsx}{\mathbf{x}}
\newcommand{\bss}{\mathbf{s}}
\newcommand{\bsz}{\mathbf{z}}
\newcommand{\bsu}{\mathbf{u}}
\newcommand{\bsc}{\mathbf{c}}
\newcommand{\bsr}{\mathbf{r}}
\newcommand{\bsC}{\mathbf{C}}
\newcommand{\bsy}{\mathbf{y}}
\definecolor{my_gray}{HTML}{DBDBDB}
\newcommand{\bseta}{\boldsymbol{\eta}}
\newtheorem{theorem}{Theorem}
\newtheorem{assumption}{Assumption}
\begin{document}

\title[Optimizing Quantitative Photoacoustic Imaging Systems]{Optimizing Quantitative Photoacoustic Imaging Systems: The Bayesian Cram{\'e}r-Rao Bound Approach}
\author{Evan Scope Crafts,$^{1}$ Mark A. Anastasio,$^{2}$ and Umberto Villa$^{1}$}
\address{$^{1}$Oden Institute for Computational Engineering and Sciences, The University of Texas, Austin, TX 78712, USA \\
$^{2}$Department of Bioengineering, University of Illinois Urbana-Champaign, Urbana, IL 61801, USA}
\vspace{10pt}
\begin{indented}
\item[] Submitted \today
\end{indented}

\begin{abstract}

Quantitative photoacoustic computed tomography (qPACT) is an emerging medical imaging modality that carries the promise of high-contrast, fine-resolution imaging of clinically relevant quantities like hemoglobin concentration and blood-oxygen saturation. However, qPACT image reconstruction is governed by a multiphysics, partial differential equation (PDE) based inverse problem that is highly non-linear and severely ill-posed. Compounding the difficulty of the problem is the lack of established design standards for qPACT imaging systems, as there is currently a proliferation of qPACT system designs for various applications and it is unknown which ones are optimal or how to best modify the systems under various design constraints. This work introduces a novel computational approach for the optimal experimental design (OED) of qPACT imaging systems based on the Bayesian Cram{\'e}r-Rao bound (CRB). Our approach incorporates several techniques to address challenges associated with forming the bound in the infinite-dimensional function space setting of qPACT, including priors with trace-class covariance operators and the use of the variational adjoint method to compute derivatives of the log-likelihood function needed in the bound computation. The resulting Bayesian CRB based design metric is computationally efficient and independent of the choice of estimator used to solve the inverse problem. The efficacy of the bound in guiding experimental design was demonstrated in a numerical study of qPACT design schemes under a stylized two-dimensional imaging geometry. To the best of our knowledge, this is the first work to propose Bayesian CRB based design for systems governed by PDEs.
\end{abstract}

\ams{
35Q62,  
62F15,  
35R30,  
35Q93,  
65C60,  
65K10, 
78M30,   
 78A46,   
 94A08,   
}

\vspace{2pc}
\noindent{\it Keywords}: Quantitative photoacoustic computed tomography, Optimal design of experiments, Infinite-dimensional Bayesian inverse problems, Adjoint-based methods, Cram{\'e}r-Rao Bound optimization

\section{Introduction}

Photoacoustic computed tomography (PACT) is a rapidly emerging hybrid medical imaging modality that aims to combine the high contrast of optical imaging modalities with the fine spatial resolution of ultrasound imaging through the use of the photoacoustic effect \cite{wang2012biomedical, tarvainen2024quantitative, poudel2019survey}. In the PACT framework, short optical pulses are used to illuminate the object of interest, leading to spatially varying photon absorption that is converted to heat via thermal relaxation. This causes a local change in pressure, which propagates through the object as an acoustic wave and is detected by ultrasonic transducers located outside the object. These measurements can be used to reconstruct an image corresponding to the induced initial pressure distribution or to estimate the underlying molecular or optical properties of the object. The latter approach is referred to as quantitative PACT (qPACT).

In qPACT, diagnostically useful information such as hemoglobin concentration and blood-oxygen saturation, which have been shown to be important markers of physiological diseases such as cancer, can be imaged \cite{cox2012quantitative}. In particular, qPACT has shown promise in the diagnosis and management of breast cancer, where it can provide valuable information regarding tumor hypoxia and angiogenesis \cite{park2023stochastic, li2015high, lin2018single}.  Another promising application area for qPACT is the diagnosis and treatment of skin diseases. 
Here potential application areas include the diagnosis and management of melanoma, as well as other skin conditions such as dermatitis, psoriasis, and port wine stains \cite{li2021seeing}. There have also been promising developments in the application of qPACT to imaging of the human digestive tract, where it has been demonstrated that qPACT can be used to assess colon function in patients with Crohn's disease \cite{waldner2016multispectral}.

While qPACT is an imaging modality with great promise, the quantitative estimation of molecular concentrations or functional properties using qPACT involves solving a multiphysics inverse problem that is non-linear and severely ill-posed due to the highly scattering nature of biological tissue. The difficulty of the problem is compounded by the fact that, unlike more established imaging modalities (e.g., magnetic resonance imaging \cite{nishimura_2016_principles}), PACT lacks established standards for the design of the imaging system. In fact, for applications such as breast cancer imaging there is currently a proliferation of proposed PACT system designs \cite{lin2018single, toi2017visualization, oraevsky2018full, schoustra2019twente, alshahrani2019all}, and it is unknown which designs are optimal, or how to best modify system designs under constraints (e.g., the ANSI safety limit). Further, optimization of the system design using clinical trials involves ethical considerations and can be expensive. There is therefore a great need for computational methods that can evaluate the quality of existing qPACT system designs and guide their optimization, especially in the early stages of development.

The computational design of qPACT systems is naturally posed as an optimal experimental design (OED) problem \cite{pukelsheim_1993_optimal}. The goal of this paper is to develop a computationally efficient OED approach for qPACT. In particular, in this paper we propose a Bayesian Cram{\'e}r-Rao bound (CRB) based approach for the design of qPACT imaging systems. Originally introduced by Van Trees, the Bayesian CRB provides a lower bound on the expected reconstruction error of any estimator (including both model-based and learned approaches) in a Bayesian inference problem under mild regularity conditions \cite{trees_2007_bayesian, bell_2013_detection}. This property it to serve as the basis for computationally efficient and estimator-independent metrics for guiding system design \cite{Oktel2005, vantrees2006bayesian}. It is also widely used to benchmark the performance of estimators \cite{Robinson2004, Aguerrebere2016}.

The development of Bayesian CRB based approaches for qPACT system design poses several challenges. First, the unknown parameters in qPACT lie in infinite dimensional function spaces. While computational approaches require these parameters to be discretized, it is desirable to have priors that are consistent with the infinite dimensional setting, e.g., pushforwards of Gaussian random fields with \textit{trace-class} covariance operators \cite{stuart2010inverse, bui2013computational}. We address this challenge by using log-Gaussian priors with covariance operators modeled as the squared inverse of a diffusion-reaction partial differential equation (PDE) \cite{lindgren2011explicit}. Second, the Bayesian CRB depends on the (Stein) score of the likelihood function, which in qPACT requires computing derivatives of the PDE-based forward model. Here we use a finite element approach to solve the PDE, which enables derivative computation using the variational adjoint method \cite{troltzsch2010optimal} at the cost of solving one additional PDE, known as the adjoint problem. Finally, qPACT inverse problems generally contain parameters, such as the reduced scattering coefficient, which are unknown but not clinically relevant, complicating the formation of the bound. While there have been several methods proposed in the literature to address such cases \cite{bell_2013_detection, miller78modified}, in this work we adopt the straightforward ``post-marginalization'' approach (see Eq. 4.599 in \cite{bell_2013_detection}) and analyze its performance. 

The computational cost of the proposed approach is dominated by the cost of evaluating the score of the likelihood function, which requires only the evaluation of the PDE-based forward model and the corresponding adjoint PDE. In contrast, many OED approaches (e.g., expected information gain based approaches that require sampling from the posterior \cite{ryan2003estimating}) require solving at least one inverse problem for each design scheme under consideration. The proposed approach is therefore computationally efficient in the sense that, for a given choice of qPACT forward model, evaluation of the Bayesian CRB requires only the evaluation of the forward model and its adjoint and does not require a nested optimization loop with repeated inverse problem solves. As such, the computational cost---measured in terms of forward model evaluations---of the Bayesian CRB approach is relatively low when compared to other OED approaches.

The proposed Bayesian CRB based design approach was investigated in a numerical study of qPACT illumination design schemes. 
Specifically, the case study presented here considers the estimation of the optical absorption coefficient, a key optical imaging parameter, under a stylized two-dimensional model of the imaging geometry. Here the imaging physics were simulated using a multiphysics model that couples a novel computational model for the light illumination subsystem in PACT imaging with a PDE-based model of PACT optical diffusion and a simplified, linear model of PACT acoustic wave propagation. In this setting, we compare the Bayesian CRB to the expected reconstruction error obtained using a maximum a posteriori (MAP) estimator. The results demonstrate that the Bayesian CRB provides an effective surrogate for the expected reconstruction error for a wide range of signal-to-noise ratio (SNR) levels. We then use the Bayesian CRB to rank qPACT design schemes, in two different settings: (1) the setting where the only unknown parameter is the absorption coefficient, and (2) the setting where the reduced scattering coefficient is also unknown and treated as a nuisance parameter. In particular, the qPACT design schemes under consideration  were different multi-illumination angle ``stop-and-go'' schemes \cite{schoustra2019twente} that were constructed based on the design of existing photoacoustic tomography systems \cite{oraevsky2018full, schoustra2019twente}, as well as theoretical considerations regarding the need for multiple illuminations in qPACT \cite{bal2011multi,bal2012multi}.
The results show that in both of the settings described above, the Bayesian CRB correlates well with the expected reconstruction error.

In summary, to the best of the authors' knowledge, this is the first paper to propose Bayesian CRB-based design for optimizing qPACT imaging systems, and in general to perform OED for Bayesian inverse problems governed by PDEs. Further, despite simplifying assumptions in the modeling of the qPACT image formation process, the problem formulation used in this work is still representative of the major challenges of qPACT imaging, including the non-linearity of the imaging operator and the presence of nuisance parameters. The performance of the proposed approach in this setting thus demonstrates its potential in guiding qPACT system design. 

The remainder of this manuscript is organized as follows. In Section \ref{sec:background}, we provide an overview of the optical and acoustic modeling of PACT imaging and introduce the proposed model for the light illumination subsystem. Section \ref{sec:methods} introduces the qPACT inverse problem and proposed Bayesian CRB based design approach for qPACT system optimization. In Section \ref{sec:numerical_studies}, we provide details of the numerical studies that were conducted, while Section \ref{sec:results} discusses the results. Finally, Section \ref{sec:discussion} provides additional comments on the proposed approach, including limitations and future research directions, followed by concluding remarks in Section \ref{sec:conclusion}.

\section{Quantitative PACT Forward Model}
\label{sec:background}

The physics of qPACT can be broken down into two components: an optical component that models light transport within the object of interest, and an acoustic component that models the propagation of the photoacoustically induced ultrasonic pressure wave. In this section, we introduce the optical and acoustic models used in this work. We also propose a novel model for the external light sources used in qPACT systems, and describe how it can be incorporated into the boundary condition of an optical model based on the diffusion approximation (a commonly used approach to model light transport in biological tissue \cite{wang2012biomedical,tarvainen2024quantitative}). Here, and throughout the remainder of the manuscript, we assume that the optical absorption coefficient is the parameter of interest, while the reduced scattering coefficient is also unknown but is treated as a nuisance parameter. 

\subsection{Notation}

We first introduce notation used in this section and throughout the remainder of this manuscript. We use lower-case bold letters to denote vectors (e.g., $\bsx$) and capital bold letters to denote matrices (e.g., $\mathbf{X}$). Scalar-valued functions are denoted using lower case italicized font (e.g., $m$), while vector-valued functions are denoted using lower-case italicized boldface (e.g., $\bm{m}$). Operators between function spaces are denoted using calligraphic font (e.g, $\mathcal{C})$. The gradient of a scalar-valued function $f(\bsx): \mathbb{R}^N \to \mathbb{R}$ is denoted $\nabla_{\bsx} f(\bsx)$ and is written as an $N \times 1$ vector; the subscript in $\nabla_{\bsx}$ is omitted when the differentiation variable is clear from context. The Jacobian of a vector-valued function $F(\bsx): \mathbb{R}^N \to \mathbb{R}^M$ is written as an $M \times N$ matrix, denoted $\nabla_{\bsx} F(\bsx)$, with entries $[\nabla_{\bsx} F(\bsx)]_{i, j} \triangleq \partial F(\bsx)_i / \partial \bsx_j$. The divergence of a vector field $\bm{m}$ is denoted $\nabla \cdot \bm{m}$. 

We use $\mathbf{X}^T$, $\mathbf{X}^{-1}$, and $\| \mathbf{X} \|_2$ to respectively denote the transpose, inverse, and spectral norm of a given matrix $\mathbf{X}$. For two matrices $\mathbf{A}$ and $\mathbf{B}$, the matrix inequality $\mathbf{A} \succcurlyeq \mathbf{B}$ means that $\mathbf{A} - \mathbf{B}$ is a positive semi-definite matrix. The brackets $\langle \cdot, \cdot \rangle$ refer to the standard Euclidean inner product, while $|| \bsx ||_2$ denotes the Euclidean norm of a given vector $\bsx$. For a positive definite matrix $\mathbf{M}$ and vectors $\bsx, \bsy$, we define the weighted norm $|| \bsx ||_{\mathbf{M}}$ as $|| \bsx ||_{\mathbf{M}} \triangleq || \mathbf{M}^{1/2} \bsx ||_2$ and the weighted inner product as $\langle \bsx, \bsy \rangle_{\mathbf{M}} \triangleq \bsx^T \mathbf{M} \bsy$. Given $\mathbf{M}$, the weighted trace of a matrix $\mathbf{X}$ is defined as $\text{tr}_{\mathbf{M}}(\mathbf{X}) = \text{tr} \left ( \mathbf{M} \mathbf{X} \right)$, where $\text{tr}\left( \cdot \right )$ denotes the trace of the given matrix. For a positive definite operator $\mathcal{A}$ and a vector-valued function $\bm{m}$, we define $|| \bm{m} ||_{\mathcal{A}}$ as $|| \bm{m} ||_{\mathcal{A}} \triangleq || \mathcal{A}^{1/2} \bm{m} ||_2$, where for vector-valued functions $\bm{m}$ defined over a domain $\Omega$, $|| \bm{m} ||_2$ denotes the vector $L^2$ norm, i.e., $|| \bm{m} ||_2 \triangleq (\int_{\Omega} || \bm{m}(\bsx) ||_2^2)^{1/2}$. We use $\partial \Omega$ to denote the boundary of a given domain $\Omega$, and $\bseta(\bsx)$ to denote the normal vector of the domain, which is assumed to be well defined for almost every $\bsx \in \partial \Omega$. The expression $\log (\cdot)$ denotes the natural logarithm. 

\subsection{Optical Model} 

In this subsection, we introduce a continuous-to-continuous (C-C) operator $\mathcal{H}_q$ that maps the optical absorption coefficient $\mu_a(\bsx)$ and the reduced scattering coefficient $\mu_s'(\bsx)$ to the absorbed energy density $h(\bsx)$, where $\bsx$ is a point in the imaging domain $\Omega \subset \mathbb{R}^D$. Specifically, we have that 
\begin{equation*}
h = \mathcal{H}_q(\mu_a, \mu_s') \triangleq \mu_a \phi(\mu_a, \mu_s', q),
\end{equation*}
where $\phi(\bsx)$ is the fluence distribution generated by the external light source $q(\bsx)$. The subscript $q$ in the operator $\mathcal{H}_q$ is used to denote the dependency on $q(\bsx)$.

While light is an electromagnetic wave that satisfies Maxwell's equations, the highly scattering nature of biological tissue makes it computationally burdensome to model the fluence distribution as such \cite{cox2012quantitative}. Instead, the diffusion approximation of the radiative transfer equation (RTE) is often used to model the fluence distribution in PACT imaging \cite{ding2015one, bal2011multi, bal2012multi, tarvainen2024quantitative}, as this model is computationally efficient to evaluate and is accurate in the highly scattering regime that characterizes many PACT applications (e.g., breast imaging) \cite{wang2012biomedical}. This is a second-order elliptic PDE that can be written as follows \cite{wang2012biomedical, lee2004modeling}:
\begin{equation}
\left\{
\begin{array}{lr}
     \mu_a(\bsx) \phi(\bsx) - \nabla \cdot D(\bsx) \nabla \phi(\bsx) = 0 & \text{for any} \; \bsx \in \Omega, \\[1mm] 
     \left \langle D(\bsx)  \nabla \phi(\bsx), \bseta (\bsx) \right \rangle + \frac{1}{2}\phi(\bsx) = 2q (\bsx) & \text{for any} \; \bsx \in \partial \Omega.
\end{array}
\right.
\label{eq:diffuse}
\end{equation}
Here, $D(\bsx) \triangleq 1/[3(\mu_a(\bsx) + \mu'_s(\bsx))]$ is the so-called diffusion coefficient, which is a function of the absorption coefficient and the reduced scattering coefficient $\mu'_s(\bsx)$, and $q(\bsx)$ models flux density transmitted into the domain at $\bsx$. Note that for simplicity, the Robin boundary condition above was derived under the assumption that the refractive indices inside and outside of $\Omega$ are matched \cite{lee2004modeling}. As the domain is typically surrounded by a non-scattering medium (e.g., water or air) where the diffusion approximation does not hold, additional modeling is required to capture the light propagation from the sources to the object. In the next subsection, we describe how this formulation can be coupled with a model of acoustic wave propagation. We then introduce the model for the light sources. 

\subsection{Acoustic Model}

In PACT, the absorbed optical energy is converted to a local increase in pressure via the photoacoustic effect, which then propagates through the object as an acoustic wave. Mathematically, this can be expressed as $p_0(\bsx) = \Gamma(\bsx) h(\bsx)$, where $p_0(\bsx)$ is the initial pressure and $\Gamma(\bsx)$ is the Gr{\"u}neisen parameter, which measures the photoacoustic efficiency of the tissue. In this work, the following simplifying assumptions are used: 1) the Gr{\"u}neisen parameter is known and constant in space (we assume $\Gamma(\bsx) \equiv 1$ for simplicity), and 2) the medium is a lossless and acoustically homogeneous medium. The validity and impact of these modeling choices is further discussed in Section \ref{sec:discussion} .

Under the above assumptions, the solution to the wave equation can be expressed in closed form \cite{xu2006photoacoustic}:
\begin{equation}
\label{eq:wave_eq_sol}
p(\bsx', t) = \frac{1}{4\pi c_0^2} \int_{\Omega} h(\bsx) \frac{d}{dt} \frac{\delta\left(t - \frac{||\bsx' - \bsx||_2}{c_0}\right)}{||\bsx' - \bsx||_2} \; d\bsx,
\end{equation}
where $c_0 \in \mathbb{R}$ is the sound speed in the medium and $\delta(\cdot)$ is the Dirac delta distribution. The above expression can be conveniently written in terms of a spherical Radon transform (SRT) \cite{poudel2019survey}:
\begin{equation}
d(\bsx', t) =  \int_{\Omega}  h(\bsx) \delta\left(c_0 t - ||\bsx' - \bsx||_2\right) \; d\bsx,
\label{eq:SRT}
\end{equation}
where the data function $d(\bsx', t)$ is related to $p(\bsx', t)$ by 
$$
p(\bsx', t) =  \frac{1}{4\pi c_0^2} \frac{\partial}{\partial t} \left ( \frac{d(\bsx', t)}{t}\right ).
$$
In the case where the domain is two-dimensional, the SRT reduces to a circular Radon transform (CRT). 

We assume the induced wave is measured by a set of idealized point-like ultrasonic transducers located outside of the domain of interest at discrete time points. We can therefore define a continuous-to-discrete (C-D) mapping that maps the absorbed energy $h(\bsx)$ to the temporally sampled measurements $\bsd \in \mathbb{R}^K$, where $K$ is the product of the number of temporal samples and ultrasonic transducers. Under the assumption of additive Gaussian noise, this can be written as 
\begin{equation}
    \bsd = \mathcal{H}_{a} h + \bsz, \quad \bsz \sim \mathcal{N}(\mathbf{0}, \boldsymbol{\Sigma}_{\mathbf{z}}).
\end{equation}
Here  $\boldsymbol{\Sigma}_{\mathbf{z}} \in \mathbb{R}^{K \times K}$ is the covariance matrix of the noise and $\mathcal{H}_{a}$ is the (linear) continuous-to-discrete (C-D) acoustic operator corresponding to  evaluation of $d(\bsx', t)$ at discrete spatial and temporal points.  

\subsection{Illumination Subsystem}
In this work, we consider qPACT experiments with illumination subsystems consisting of a set of cone beam light sources, which we assume are located in a region outside the domain $\Omega$ that is absorption-dominated with negligible scattering effects. Under these assumptions, the flux density at $\bsx \in \partial \Omega$ associated with a source with power $P$ located at position $\bss \in \mathbb{R}^D$ can be written as 
$$
\bm{q}(\bsx; \bss) = P  g(\theta)  \frac{e^{- \mu_a' || \bss - \bsx||_2}}{4\pi ||\bss - \bsx||_2^2} \bsu.
$$
Here $\bsu \triangleq (\bss - \bsx)/||\bss - \bsx||_2$ is the ray direction, $\mu_a' \in \mathbb{R}$ is the optical absorption coefficient outside the domain, which we assume is constant, and $g(\theta)$ defines the modulated intensity of the beam with respect to the angle $\theta$ between the direction of the cone-beam source and the ray $\bsu$. Note that the above expression is derived from the modified monopole solution for isotropic light sources (see, e.g., \cite{pharr2023physically}) under the assumption that the power of the cone beam decays with $\theta$; see Figure \ref{fig:illumination_fig} for an illustration. The total flux density transmitted into the domain at the point $\bsx \in \partial \Omega$ can therefore be written as
$$
q(\bsx) = \sum_{i=1}^S \mathrm{max} \left ( \left \langle \bm{q}(\bsx; \bss_i), \bseta(\bsx) \right \rangle, 0 \right ),
$$
where $\{\bss_i \}_{i=1}^S$ is the set of source locations and $\boldsymbol{\eta}(\bsx)$ is the unit normal vector to the boundary of $\Omega$ at $\bsx$. This defines the boundary term in \eqref{eq:diffuse}. 
\begin{figure}
    \centering
    \includegraphics[width = .4\textwidth]{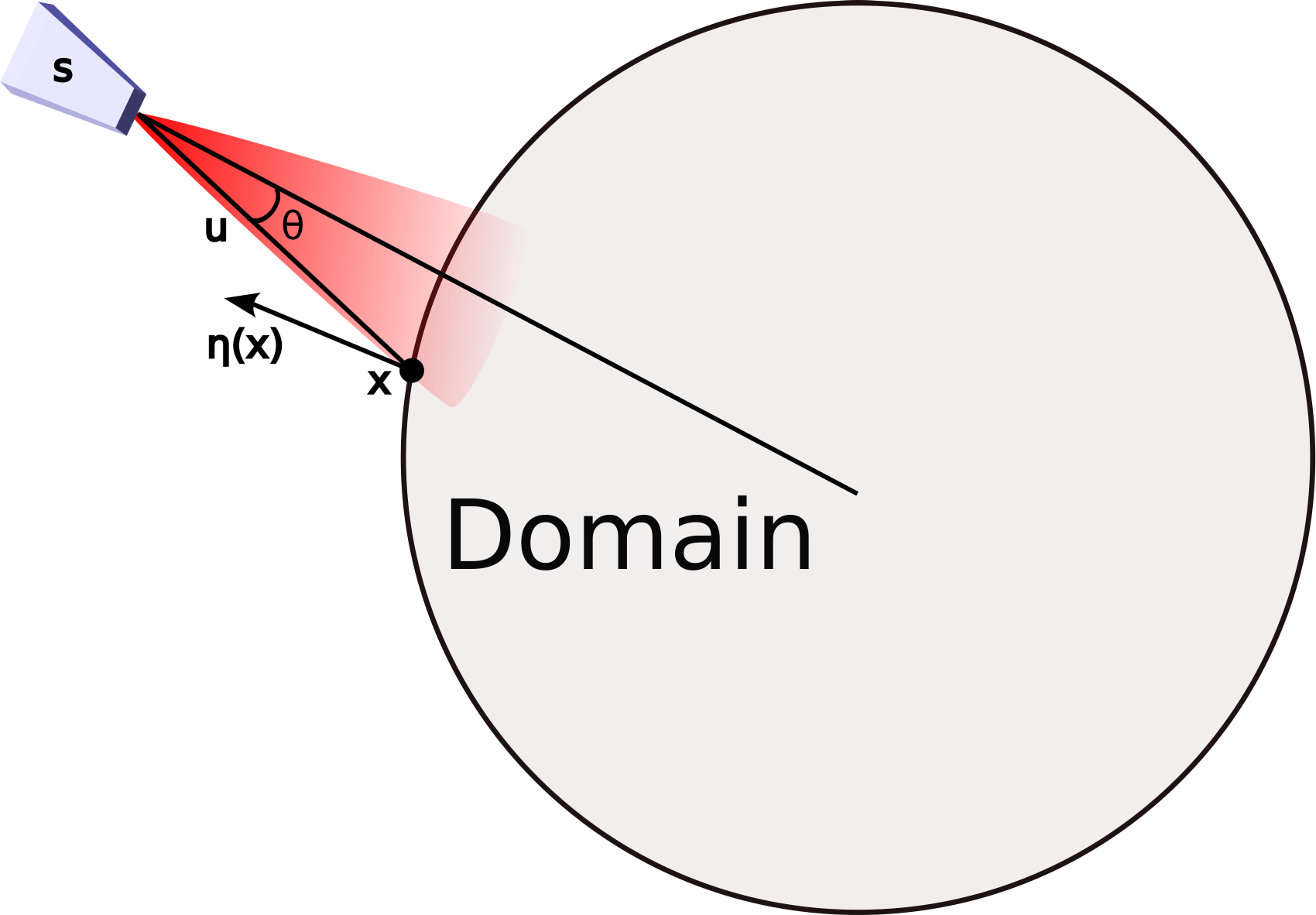} \\
    \caption{Illustration of the cone-beam illumination system model. The power of the cone-beam is non-isotropic and decays as the angle $\theta$ between the direction of the cone-beam and $\bsu$ increases, where $\bsx$ is a point on the domain's boundary and $\bss$ is the location of the source. The flux density transmitted into the domain at $\bsx$ then depends on angle between $\bsu$ and the domain normal vector $ \bseta(\bsx)$ at $\bsx$.} 
    
    \label{fig:illumination_fig}
\end{figure}

\section{Inverse Problem and OED Approach}
\label{sec:methods}

This section introduces the proposed Bayesian CRB based approach for qPACT OED. Here we first formulate the qPACT reconstruction problem as a Bayesian inverse problem, in which the optical absorption coefficient and reduced scattering coefficient are parameterized as functions of a latent inversion parameter $\bm{m}$, and motivate the proposed Bayesian CRB based OED approach. We then introduce the Bayesian CRB and the proposed approach for computing the bound for $\bm{m}$ under the assumption that the regularity conditions required to form the bound hold. Finally, we describe how the bound on $\bm{m}$ can be transformed into a bound on the optical absorption coefficient and used as the basis for design metrics in qPACT OED. 

\subsection{Problem Formulation}

We first formulate the qPACT reconstruction problem as a Bayesian inference problem. The Bayesian setting enables us to capture a key feature of the OED problem: A good experimental design scheme should perform well on any physiologically plausible realization of the unknown parameters $\mu_a$ and $\mu_s'$, not just a small subset of representative examples. Here it is crucial to choose a prior distribution that captures the statistics of the unknown parameters, while also ensuring that the corresponding statistical inverse problem is well-posed. To this end, we propose parameterizing $\mu_a$ and $\mu_s'$ as $\mu_a(\bsx) = \overline{\mu_a} e^{m_1(\bsx)}$ and $\mu_s'(\bsx) = \overline{\mu_s'} e^{m_2(\bsx)}$. Here $\overline{\mu_a}$ and $\overline{\mu_s'}$ are reference values for the optical absorption coefficient and reduced scattering coefficient, respectively, while $\bm{m}(\bsx) \triangleq [m_1(\bsx), m_2(\bsx)]^T$ is a zero-mean Gaussian random vector field, i.e., $\bm{m}\sim \mathcal{N}(\bm{0}, \mathcal{C})$, that represents the logarithms of the absorption and reduced scattering coefficients. Here $\mathcal{C}$ is the covariance operator, which for simplicity we assume has block diagonal structure. The block components $\mathcal{C}_1$ and $\mathcal{C}_2$ corresponding to $\mu_a$ and $\mu_s'$, respectively, are differential operators that can be written as $\mathcal{C}_1 = \mathcal{A}_1^{-2}$ and $\mathcal{C}_2 = \mathcal{A}_2^{-2}$, where $\mathcal{A}_1$ and $\mathcal{A}_2$ are diffusion-reaction PDE operators \cite{bui2013computational}. 

The choice of prior introduced above has several key benefits. First, the exponential parameterization preserves the nonnegativity of $\mu_a$ and $\mu_s'$. Second, the covariance operator is trace-class \cite{stuart2010inverse}, which ensures that the prior distribution has bounded variance. Third, the differential structure of the covariance operator leads to prior samples that have spatial structure and can therefore capture the general structure of patterned occlusions found in biological tissue. Finally, the covariance operators have simple and easy-to-apply square roots, i.e., $\mathcal{C}_1^{1/2} = \mathcal{A}_1^{-1}$ and $\mathcal{C}_2^{1/2} = \mathcal{A}_2^{-1}$, which yields fast sampling from the prior distribution \cite{villa2021hippylib}. 

In qPACT experiments, multiple illuminations of the object of interest are usually used to mitigate the ill-posedness of the inverse problem \cite{bal2011multi}. Let $\bsy \triangleq \{ \bsd_i \}_{i=1}^I$, where $I$ is the total number of illuminations and $\bsd_i$ is the data obtained from the $i$-th illumination. Then under the above assumptions, by Bayes' theorem we have that the posterior measure $\mu^y$ of $\bm{m}$ is related to the prior measure $\mu_0$ by
\begin{equation}
\label{eq:posterior}
\frac{d \mu^y}{d \mu_0} = p (\bsy \mid \bm{m}) \propto  \prod_{i=1}^I \mathrm{exp} \left ( - \frac{1}{2} || \mathcal{H}_{a} h_i - \bsd_i ||_{\boldsymbol{\Sigma}_\mathbf{z}^{-1}}^2 \right), \quad h_i = \tilde{\mathcal{H}}_{q_i}(\bm{m}),
\end{equation}
where $\tilde{\mathcal{H}}_{q_i}(\bm{m}) \triangleq \mathcal{H}_{q_i}(\overline{\mu_a} e^{m_1}, \overline{\mu_s'} e^{m_2})$ and $q_i(\bsx)$ is the $i$-th illumination. Here, \eqref{eq:posterior} is understood as the Radon-Nikodym derivative of $\mu^y$ with respect to $\mu_0$ \cite{stuart2010inverse}.

Different point estimators of the unknown parameters can be constructed from the posterior measure. For example, the MAP estimate of the unknown parameters can be obtained be solving the following optimization problem \cite{stuart2010inverse}:
\begin{gather}
\argmin_{\bm{m}} \;  \frac{1}{2} \sum_{i=1}^I ||\mathcal{H}_{a} h_i - \bsd_i ||_{\boldsymbol{\Sigma}_{\mathbf{z}}^{-1}}^2 + \frac{1}{2} || \bm{m} ||_{\mathcal{C}^{-1}}^2 \nonumber \\
\text{subject to } h_i = \tilde{\mathcal{H}}_{q_i}(\bm{m}), \quad i =1, \dots, I. \label{eq:map_problem}
\end{gather}
This is a non-linear and non-convex optimization problem that can be discretized using finite element approaches \cite{brenner2008mathematical}.

The goal of this work is to develop an OED approach for the above inverse problem. Specifically, in this work we focus on the optimization of the illumination subsystem. A straightforward approach to this problem would be to choose a point estimator for the qPACT inverse problem (e.g., a MAP estimator), and select the design scheme that minimizes the expected reconstruction error under this estimator. However, there are two main reasons why this approach may not be desirable in the guidance of imaging system design. First, the optimized design scheme would be dependent on the specific choice of estimator, and may not perform well with other estimators. For example, a design that performs well when using a MAP estimator may perform differently when using a minimum mean square error (MMSE) estimator or a learning-based estimator. Further, as the MAP estimation problem is non-convex and challenging to solve, even the choice of optimization algorithm used to solve the MAP problem could impact the optimized design scheme. Second, it is computationally expensive to compute the expected reconstruction error under a given estimator and design scheme, as this would require repeatedly solving the qPACT inverse problem for a prohibitively large number of samples of $\bm{m}$ and $\bsy$ until convergence is obtained. To address these issues, we propose a novel approach to qPACT OED based on the Bayesian CRB, which we now introduce. 

\subsection{Forming the Bound}

The Bayesian CRB provides an estimation-theoretic lower bound on the error of \emph{any} estimator in a Bayesian inverse problem. Specifically, let $\bsm \in \mathbb{R}^N$ be a discretization of $\bm{m}$ in \eqref{eq:map_problem}, where $\bsm = [\bsm_1^T \; \bsm_2^T]^T$ and $\bsm_1 \in \mathbb{R}^{N_1}$ and $\bsm_2 \in \mathbb{R}^{N_2}$ ($N_1 + N_2 = N$) are discretizations of $m_1(\bsx)$ and $m_2(\bsx)$, respectively. Further, let $p(\bsm, \bsy)$ denote the joint probability density  $p(\bsm, \bsy)$ of $\bsm$ and $\bsy$, defined as 
\begin{equation}
\label{eq:joint_probability}
p(\bsm, \bsy) \triangleq \exp\left\{ - \frac{1}{2} \| \bsg(\bsm) - \bsy \|^2_{\boldsymbol{\Sigma}_\bsz^{-1}} \right\} \exp\left\{ - \frac{1}{2} \| \bsm \|^2_{\bsC^{-1}} \right\},
\end{equation}
where $\bsg(\bsm)$ stems from the discrete counterpart of the continuous-to-discrete qPACT imaging operator $\{\mathcal{H}_a\tilde{\mathcal{H}}_{q_i}(\bm{m})\}_{i=1}^I$ and $\bsC$ is the covariance matrix of $\bsm$ and is obtained by discretizing $\mathcal{C}$. Then under mild regularity conditions (see \ref{sec:appendix_wellposed})  the expected error of an estimator $\hat{\bsm}(\bsy)$ of $\bsm$ can be described by the following information inequality \cite{bell_2013_detection}:
\begin{equation}
    \mathbb{E}_{\bsm, \bsy} \left [\left (\hat{\bsm}(\bsy) - \bsm \right )(\hat{\bsm} \left (\bsy) - \bsm \right )^T \right ] \succcurlyeq \bsV_{\bsm } \triangleq \bsJ_{\bsm }^{-1}. \label{bound}
\end{equation}
Here the expectation is computed over the joint distribution $p(\bsm, \bsy)$, while $\bsV_{ \bsm } \in \mathbb{R}^{N \times N}$  and $\bsJ_{\bsm }  \in \mathbb{R}^{N \times N}$ are the Bayesian CRB and Bayesian information, respectively, for the parameter $\bsm$, with the Bayesian information defined as 
\begin{equation}
    \bsJ_{\bsm } \triangleq \mathbb{E}_{\bsm, \bsy} \left [  \nabla_\bsm \log p(\bsm, \bsy) \; \nabla_\bsm \log p(\bsm, \bsy)^T \right ].
    \label{eq:jb_definition}
\end{equation}
The above information inequality enables the Bayesian CRB to serve as a computationally efficient and estimator-independent surrogate for the expected error covariance in Bayesian inverse problems, which facilitates its use as the basis for OED metrics \cite{chaloner1995bayesian}. 

To compute the bound, we use the following decomposition of the Bayesian information \cite{bell_2013_detection}: 
\begin{equation}
\bsJ_{\bsm } = \bsJ_P + \bsJ_D.
\label{eq:Jb_exact}
\end{equation}
Here, $\bsJ_P$ is known as the prior term and is given by  
\begin{equation}
\bsJ_P \triangleq \mathbb{E}_{\bsm} \left [ \nabla_{\bsm} \log p(\bsm) \; \nabla_{\bsm} \log p(\bsm)^T \right ] =  - \mathbb{E}_{\bsm} \left [  \nabla_{\bsm} [\nabla_{\bsm} \log p(\bsm) ] \right ],
\label{eq:Jp}
\end{equation}
where  the second equality can be obtained via integration by parts (see, e.g., \cite{bell_2013_detection}). Further, the data term $\bsJ_D$ can be written as
\begin{equation}
\bsJ_D \triangleq \mathbb{E}_{\bsm, \bsy} \left [ \nabla_\bsm \log p(\bsy \, | \, \bsm) \; \nabla_\bsm \log p(\bsy \, | \, \bsm)^T\right ], 
\label{eq:Jd}
\end{equation} 
In our setting, $\bsm$ is Gaussian distributed, and as a result $\bsJ_P$ can be straightforwardly computed in closed form, i.e., $\bsJ_P = \bsC^{-1}$, where $\bsC$ is the covariance matrix of $\bsm$. 
To compute $\bsJ_D$, we use Monte-Carlo sampling from the joint distribution $p(\bsm, \bsy)$ to approximate the expectation in \eqref{eq:Jd}. Specifically, we form the following estimate $\hat{\bsJ}_D$ of $\bsJ_D$:
\begin{equation}
\label{eq:Jd_estimator}
\hat{\bsJ}_D \triangleq \frac{1}{N_s} \sum_{i=1}^{N_s}  \nabla_\bsm \log p(\bsy_i \, | \, \bsm_i) \; \nabla_\bsm \log p(\bsy_i \, | \, \bsm_i)^T.
\end{equation}
Here $\{\bsm_i, \bsy_i\}_{i=1}^{N_s}$ are i.i.d. samples from $p(\bsm, \bsy)$, and $\nabla_\bsm \log p(\bsy_i \, | \, \bsm_i)$, the (Stein) score of the likelihood, is obtained using the variational adjoint method at the cost of solving on additional PDE known as the adjoint problem; see \ref{sec:appendix_deriv} for details. We then form the following estimates $\hat{\bsJ}_{\bsm}$ and $\hat{\bsV}_{\bsm}$ of the Bayesian information and Bayesian CRB for $\bsm$, respectively:
$$
\hat{\bsJ}_{\bsm} \triangleq \bsC^{-1} + \hat{\bsJ}_D, \quad \hat{\bsV}_{\bsm} \triangleq \hat{\bsJ}_{\bsm}^{-1}.
$$
Note that $\hat{\bsJ}_{\bsm}$ is the sum of a symmetric positive definite and a positive semi-definite matrix and is therefore invertible, which ensures that $\hat{\bsV}_{\bsm}$ is well defined.

\subsection{Bound Transformation and Design Metric}

As previously noted, the optical absorption coefficient is the parameter of interest in our problem setting, while the reduced scattering coefficient is treated as a nuisance parameter. Several approaches have been introduced in the literature to form the Bayesian CRB in the presence of nuisance parameters \cite{bell_2013_detection, miller78modified}. Here we consider a straightforward approach: The bound on $\bsm_1$ is obtained by simply taking the block of $\bsV_{\bsm}$ corresponding to the optical absorption coefficient (see, e.g., Eq. 4.599 in \cite{bell_2013_detection}). We refer to this strategy as ``post-marginalization,'' because the impact of the nuisance parameter on the estimation problem is marginalized out after forming the bound. We denote the estimate of $\bsV_{\bsm_1}$ obtained from $\hat{\bsV}_{ \bsm }$ using this approach as $\hat{\bsV}_{ \bsm_1 }$.

The bound on $\bsm_1$ can be transformed into a bound on the reconstruction error of the optical absorption coefficient using the change-of-variable formula for the Bayesian CRB \cite{bell_2013_detection}. The change-of-variable formula enables the computation of a reconstruction bound on any parameter $\bsr \in \mathbb{R}^{M}$ that can be written as $\bsr \triangleq f(\bsm_1)$, where $f: \mathbb{R}^{N_1} \to \mathbb{R}^{M}$ is an arbitrary differentiable function. For estimators $\hat{\bsr}(\bsy)$ of $\bsr$, we then have that \cite{bell_2013_detection}:
\begin{equation}
\label{eq:change_variable}
\mathbb{E}_{\bsm, \bsy} \left [\left (\hat{\bsr}(\bsy) - f(\bsm_1) \right )(\hat{\bsr} \left (\bsy) - f(\bsm_1) \right )^T \right ] \succcurlyeq \bsC^T \; \bsV_{\bsm_1} \; \bsC, \quad \bsC \triangleq \mathbb{E}_{\bsm} \left [\nabla_{\bsm_1} \left[f(\bsm_1) \right] \right ].
\end{equation}
In our setting, we define $\bsr \triangleq \boldsymbol{\mu}_a = \overline{\mu_a} e^{\bsm_1}$, where $\boldsymbol{\mu}_a \in \mathbb{R}^{N_1}$ is a discretization of $\mu_a(\bsx)$, and $f$ takes the form of a pointwise exponential transformation. This yields the following estimator for the bound on the estimation of $\boldsymbol{\mu}_a$:
\begin{equation}
\label{eq:change_variable_estimate}
\hat{\bsV}_{ \boldsymbol{\mu}_a } \triangleq \hat{\mathbf{C}}^T \; \hat{\bsV}_{\bsm_1} \; \hat{\mathbf{C}},
\end{equation}
where $\hat{\mathbf{C}}$ is an estimator of $\mathbf{C}$ obtained using Monte-Carlo sampling.\footnote{In this work, the prior distribution is parameterized using an exponential tranformation, and a closed-form expression for $\mathbf{C}$ can be obtained from the moment generation function of the multivariate Gaussian distribution. However, here Monte-Carlo sampling is used for generality.}

In our approach, $\hat{\bsV}_{ \boldsymbol{\mu}_a }$ serves as the basis for evaluating the quality of a given design scheme.  Specifically, in this work, we aim to choose design schemes that minimize the mean square error (MSE) in estimates of $\mu_a$. This corresponds to minimizing $\text{tr}_{\mathbf{M}}\left( \hat{\bsV}_{ \boldsymbol{\mu}_a } \right)$, where $\mathbf{M} \in \mathbb{R}^{N_1 \times N_1}$ is the mass matrix corresponding to the finite element discretization of $\mu_a$ and is needed to make the design criterion consistent with the infinite-dimensional setting \cite{bui2013computational, alexanderian2014optimal}. This design metric can be viewed as the Bayesian analogue of the popular CRB-based A-optimality design criterion \cite{pukelsheim_1993_optimal}.

\if 0
In this section, we address the regularity condition of the qPACT Bayesian inverse problem required to form the Bayesian CRB bound in Eq. \eqref{bound}.
The Bayesian CRB is well-defined under the following regularity condition for the joint probability distribution $p(\bsm, \bsy)$ (see, e.g. Section 4.3.3.2 of \cite{bell_2013_detection}): 
\begin{itemize}
    \item[A.1] The joint probability distribution $p(\bsm, \bsy)$ exists and has partial derivatives $\partial p(\bsm, \bsy) / \partial \bsm_i$ that are absolutely integrable for $i=1, \dots, N$. 
    \item[A.2] The second partial derivatives of the joint probability distribution, i.e., $\partial^2 p(\bsm, \bsy) / \partial \bsm_i \partial \bsm_j$, must exist and be absolutely integrable for all $i, j \in \{ 1,.. \dots, N\}$.
    \item[A.3] The estimator $\hat{\bsm}(\bsy)$ must have conditional error $B(\bsm) = \mathbb{E}_{\bsy \mid \bsm} [\hat{\bsm}(\bsy) - \bsm] $ such that $B(\bsm) p(\bsm) \to 0$ as $|| \bsm || \to \infty$.
\end{itemize}

All of the above assumptions are satisfied in the problem setting considered here. Existence of the first and second partial derivatives of the joint probability distribution is guaranteed by the smoothness (differentiability) of the forward model with respect to the inversion parameters $\bsm$ and the choice of Gaussian prior and measurement noise distributions. Further, due to the Gaussianity of the prior, the magnitudes of the partial derivatives decrease exponentially as $\| \bsm \| \to \infty$, and thus they are absolutely integrable, so Assumptions (i) and (ii) above are satisfied. Further, since $p(\bsm)$ also decreases exponentially as $\| \bsm \| \to \infty$, Assumption (iii) will be satisfied as long as $\hat{\bsm}(\bsy)$ is not completely pathological. To see this, note that even the trivial choice $\hat{\bsm}(\bsy) = \bsc$ for some constant vector $\bsc \in \mathbb{R}^N$ will satisfy the assumption, as with this choice $B(\bsm)$ will grow linearly as $|| \bsm || \to \infty$ while $p(\bsm)$ will decrease exponentially, so their product satisfies $B(\bsm) p(\bsm) \to 0$.
\fi

\section{Numerical Studies}
\label{sec:numerical_studies}

The efficacy of the proposed approach in providing a computationally efficient and estimator-independent metric for qPACT system design was evaluated in a set of numerical studies. In the first study, the proposed Bayesian CRB estimator was validated by comparing the estimated bound with the reconstruction error at different SNR levels in a simplified setting with a single-illumination design scheme and known scattering coefficient. In the second study, we used the bound to rank the performance of two different multi-illumination qPACT design schemes in the setting where the reduced scattering coefficient is assumed to be known. The third study extends the second study to the unknown scattering coefficient setting. The related details of the three numerical studies are provided in the following subsections.

\subsection{Problem Setting}

We first describe the study details common to all three numerical studies, including the imaging geometry and prior distribution. In this work, we used a stylized two-dimensional virtual imaging system with a circular acoustic detection geometry. This choice was made for the sake of simplicity but can also be motivated by the fact that, while qPACT is a three-dimensional imaging technology, some qPACT systems use elevationally focused illuminators and ultrasonic transducers, which exhibit quasi two-dimensional geometry \cite{alshahrani2019all}. 

The acoustic detection geometry consisted of $360$ ultrasonic transducers uniformly placed in a circle of radius $6 \; \rm{cm}$ centered around the domain, i.e., the transducers were placed $1 \; \rm{cm}$ from the domain boundary. The speed of sound in the domain was set as $c_0 = 1500 \; \rm{m/s}$. Each transducer recorded measurements at $184$ time points with sampling interval $\approx 2 \times 10^{-7} \; \rm{s}$ for a total data dimension of $K = 66240$. The covariance matrix $\boldsymbol{\Sigma}_{\bsz}$ for the measurement noise was set as $\boldsymbol{\Sigma}_{\bsz} = \sigma^2 \mathbf{I}$, where $\mathbf{I}$ is the identity matrix and $\sigma^2 = 10^{-3}$ except where otherwise noted. 

In all studies, the object support $\Omega \subset \mathbb{R}^2$ was a circle of radius $5 \; \mathrm{cm}$ located at the center of  of the imaging system.
The optical properties were modeled following a log-Gaussian prior distribution on $\mu_a$ with reference value $\overline{\mu_a} = e^{-2} \approx 0.1 \; {\rm cm}^{-1}$. The corresponding covariance operator $\mathcal{C}_1$ was chosen such that the pointwise prior variance was uniformly $0.2$ and the correlation length, i.e., the distance between two points $\bsx_1, \bsx_2 \in \Omega$ such that their correlation coefficient is $0.1$, was $5 \; \mathrm{cm}$. In Studies 1 and 2, where the reduced scattering coefficient is fixed, we set $\mu_s'(\bsx) \equiv 10  \; {\rm cm}^{-1}$. In Study 3, the prior on $\mu_s'$ was implemented with reference value $\overline{\mu_s'} = 10  \; {\rm cm}^{-1}$ and the covariance operator $\mathcal{C}_2$ was implemented with pointwise variance $0.05$ and correlation length $5 \; \mathrm{cm}$. The top row of Figure \ref{fig:mua_recon_val} shows four i.i.d. samples of $\mu_a$ from the prior. As can be seen, the samples capture the general structure of patterned occlusions in biological tissues and the range of $\mu_a$ values is plausible.



\subsection{Design Schemes}

For validation in the simplified single-illumination setting of Study 1, we fixed the design scheme so that $q(\bsx) \equiv 1 \; \rm{AU}$, where here $1 \; \rm{AU}$ represents the ANSI safety limit on light exposure. In Studies 2 and 3, the multi-illumination design schemes both use $S = 10$ cone-beam sources located on a circle of radius $10 \; \mathrm{cm}$ centered around the domain. A total of $I = 4$ illuminations were used for each multi-illumination design scheme. In the first scheme, the light sources were placed uniformly along an arc of the circle subtended by an angle of $90 ^{\circ}$; the light sources were then rotated by $\alpha = \pi/2$ radians after each of the four illuminations to ensure complete coverage of the domain. We refer to this scheme as the contiguous design scheme. In the second design scheme, the ten light sources were placed uniformly around the circle, and, after each illumination, were rotated by an angle $\alpha = \pi/20$ radians. We refer to this scheme as the interlaced design scheme.  An illustration of the two schemes is given in Figure \ref{fig:design_scheme}. 

\begin{figure}
    \centering
    \includegraphics[width = \textwidth]{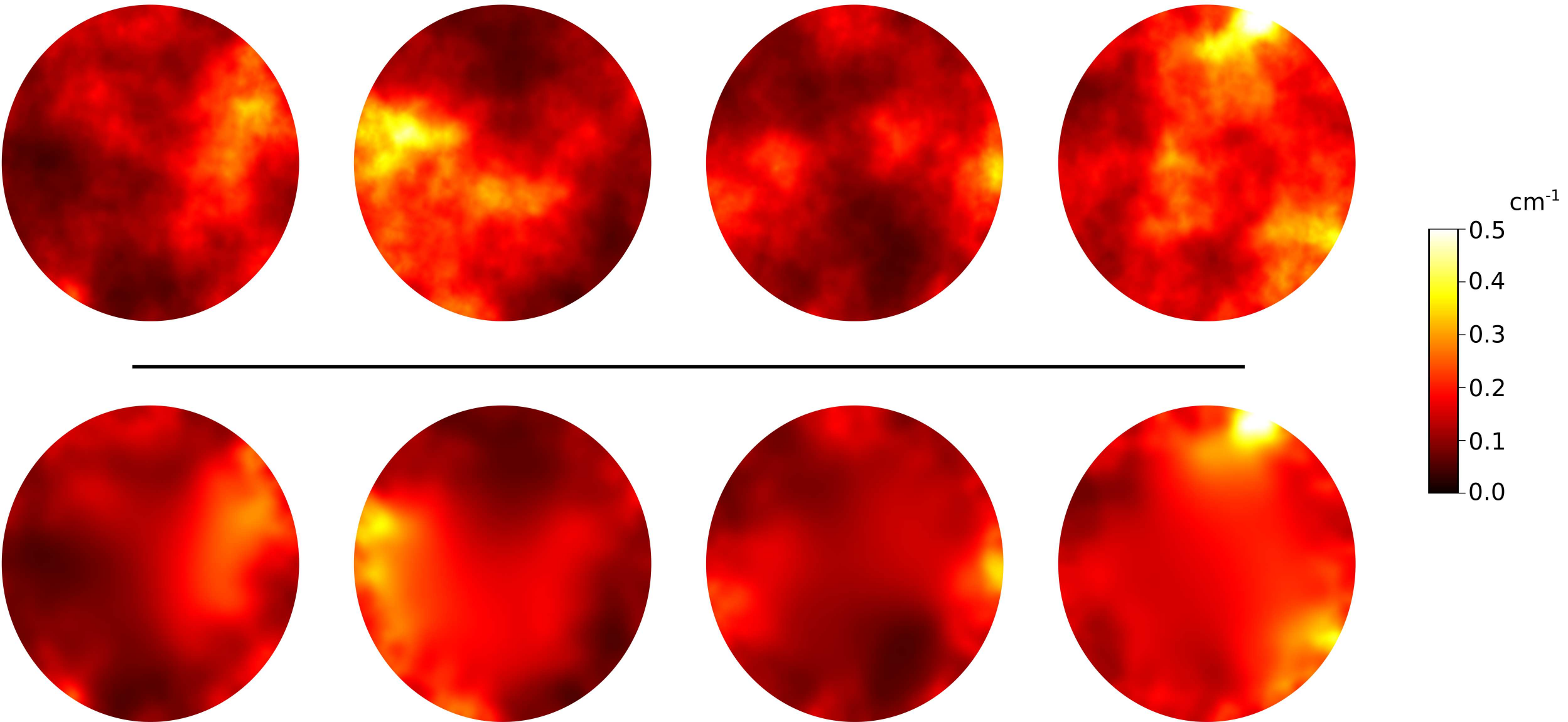}
    \caption{Study 1: Four i.i.d. samples of the absorption coefficient $\mu_a$ from the log-Gaussian prior (top) and the corresponding reconstruction results in the single-illumination setting (bottom). As can be seen, the samples have the general structure of patterned occlusions found in biological tissue and the range of $\mu_a$ values is physiologically plausible. The reconstructions capture the main features of the prior samples, with the best performance near the boundary of domain, where the signal-to-noise ratio is highest.}
    \label{fig:mua_recon_val}
\end{figure}

For each design scheme, the power $P$ of the sources was normalized so that fluence had a maximum value equal to the ANSI safety limit ($1 \; \rm{AU}$). The modulation function of the sources was set as $g(\theta) = \cos(\theta) \, \1(| \theta | < \beta/2)$, where $\1$ is the indicator function, $\beta = 25^{\circ}$ is the aperture angle of the sources, and the sources all point towards the center of the imaging system. The absorption coefficient for the surrounding medium was set as $\mu_a' = 10^{-3} \; \rm{cm}^{-1}$ .  Figure \ref{fig:fluence_fig} shows the fluence distributions from a single illumination and the superimposed fluences from the four illuminations for the interlaced and contiguous design schemes, where here the absorption coefficient and reduced scattering coefficient were set equal to their prior means. As can be seen, both design schemes illuminate the entire domain, with the interlaced design scheme providing more fluence overall. 

\begin{figure}
    \centering
    \hspace{1cm} \includegraphics[width = .95\textwidth]{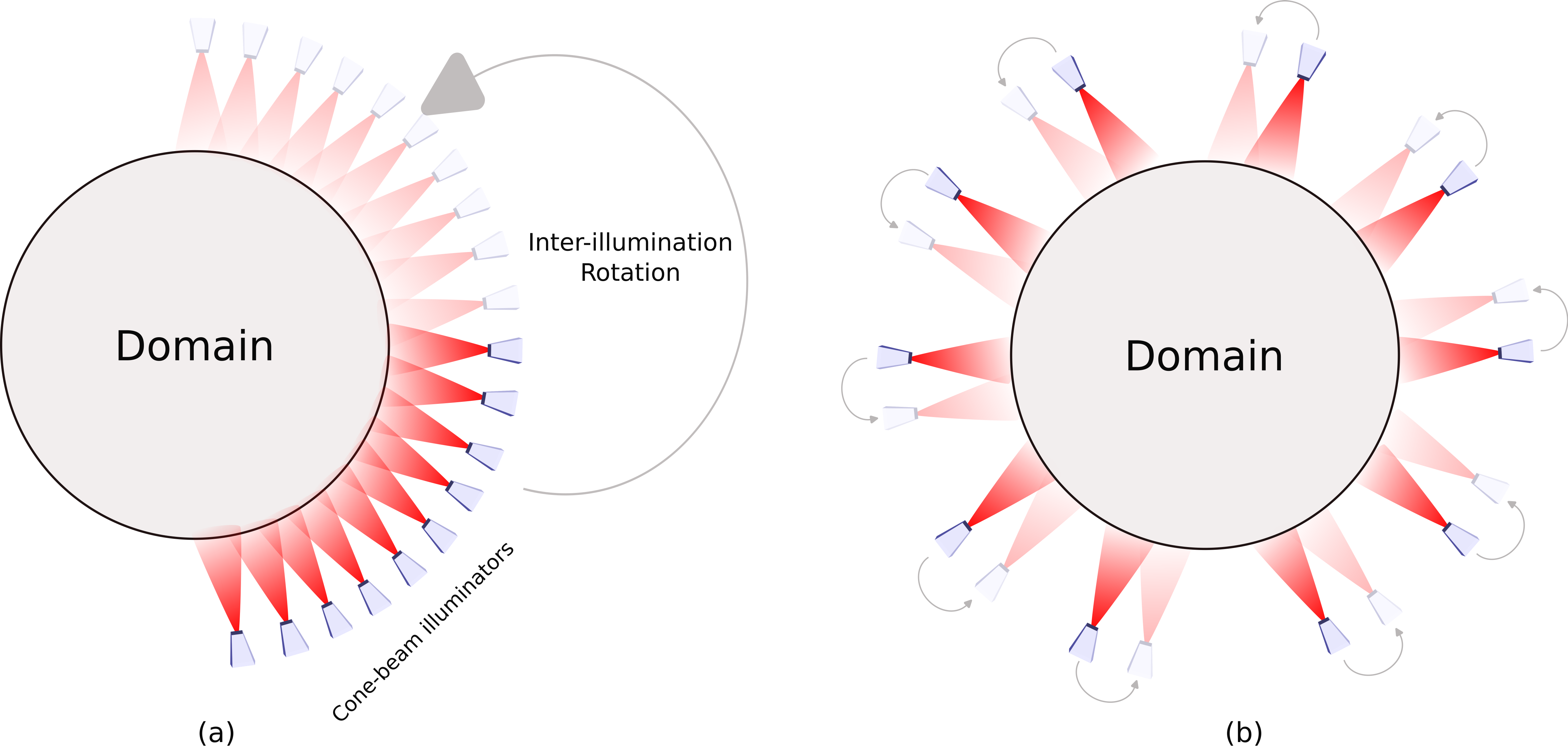}
    \caption{Illustration of the (a) contiguous and (b) interlaced design schemes. In each design scheme, cone-beam illuminators are placed in a circle around the domain. After each illumination, the illuminators are rotated by an angle $\alpha$ along the circle, where $\alpha = \pi/2$ for the contiguous scheme and $\alpha = \pi/20$ for the interlaced scheme.}
    \label{fig:design_scheme}
\end{figure}

\subsection{Implementation of Proposed Approach}

The optical and acoustic models were implemented in hIPPYlib \cite{villa2018hippylib, villa2021hippylib}, a scalable Python library for PDE-based inverse problems. Here the unknown parameter $\bm{m}$ and all other parameters defined over the domain $\Omega$ were discretized using the finite-element approach with first-order continuous Galerkin elements. The finite element mesh was unstructured with triangular elements, and the element diameter set to smoothly interpolate between $0.08 \; \rm{cm}$ near the boundary and $.15 \; \rm{cm}$ in the center of the domain.  The resulting parameter dimension was $N_1 = N_2 = 7499$, i.e., $N = 14998$.  The acoustic model was implemented using a discretization of the CRT in \eqref{eq:SRT} that extends the CRT implementation in \cite{hansen2018air} to unstructured grids and finite element representations of the object function. Specifically, for each sensor location, the finite element representation of $h(\bsx)$ was integrated along the concentric arcs corresponding to the sampling time points. 

In all numerical studies, the proposed approach was implemented with $N_s = 5000$ samples used to estimate $\bsJ_D$; $5000$ samples were also used to estimate the change-of-variable matrix $\hat{\mathbf{C}}$ in \eqref{eq:change_variable_estimate}. All bound computation was done on a Linux server with two Intel Xeon Gold 5218 processors (with $2.30$ GHz and $32$ cores each) and 386 GB RAM running Python 3.8.16.

The proposed approach was validated through comparison to the expected reconstruction error. Here the estimator $\hat{\bsm}(\bsy)$ was implemented as a MAP estimator using the inexact Newton-CG algorithm with early termination of the CG iterations via Eisenstat-Walker (to prevent oversolving) and Steihaug (to avoid negative curvature) criteria \cite{villa2021hippylib, nocedal1999numerical}. Globalization was achieved using line search. The solver was implemented on the Frontera supercomputer at the Texas Advanced Computing Center (TACC), with each inverse problem solve running on a single core of an Intel Xeon Platinum 8280 processor running Python 3.6.9. In each experiment, the expectation in \eqref{bound} was approximated using $10000$ inverse problem solves unless otherwise noted.

\section{Results}
\label{sec:results}

In this section, representative results are shown to illustrate the performance of the proposed approach. 

\subsection{Study 1: Bound Validation}

We first validated the proposed Bayesian CRB-based design approach in the single-illumination setting with known scattering coefficient. Figure \ref{fig:mua_recon_val} shows i.i.d. samples from the prior distribution on $\mu_a$ and corresponding reconstructions in this setting. Despite the ill-posedness of the inverse problem, the reconstruction algorithm is able to capture the main features of the ground-truth absorption coefficient maps, especially near the boundary of the domain, where the SNR is highest.


Figure \ref{fig:recon_vs_bound_val} shows the pointwise MSEs and the corresponding Bayesian CRB bounds for the latent parameter $m_1$ and $\mu_a$. As can be seen, the proposed approach provides a tight lower bound on the error in $m_1$. The change-of-variable transformation used to compute the $\mu_a$ bound causes a loss of bound sharpness. However, the $\mu_a$ bound is still fairly tight, captures the general structure of the expected error, and, as we will show, provides an effective surrogate for the reconstruction error in experimental design. Further, as can be seen in Figure \ref{fig:snr_figure}, which plots the MSE and corresponding Bayesian CRB bound as a function of SNR, these observations hold for a wide range of SNR levels, with no significant loss in sharpness in either the low or high SNR regimes. 

\begin{figure}
    \centering
    \includegraphics[width=\textwidth]{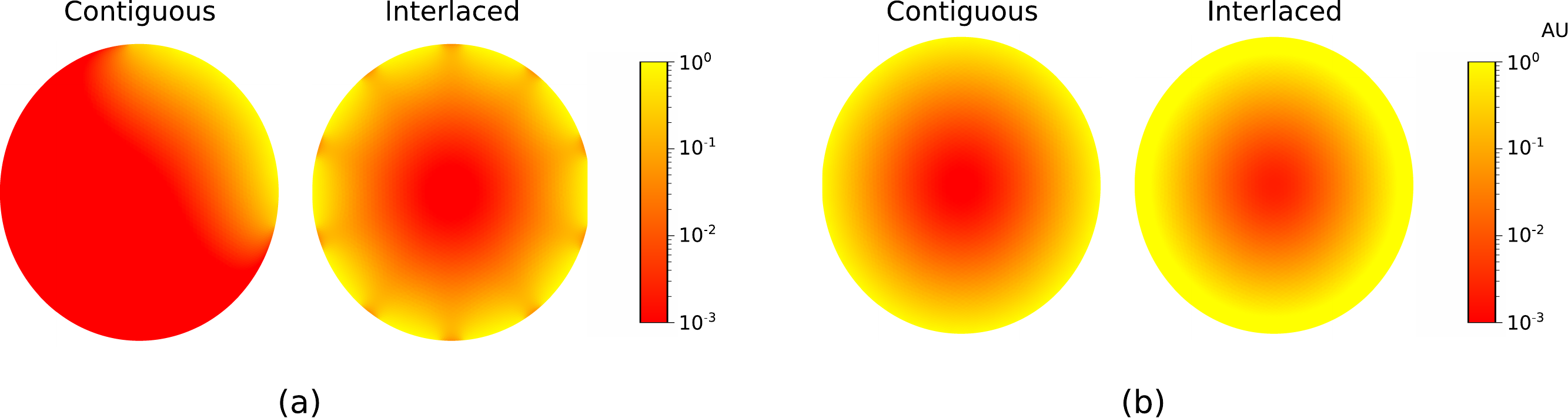}
    \caption{(a) Fluence distribution from a single illumination and (b) the sum of the fluence distributions over the four illuminations from the contiguous and interlaced design schemes. Here the fluences were computed with the absorption coefficient and reduced scattering coefficient set as the prior mean. As can be seen, both design schemes can illuminate the entire domain. }
    \label{fig:fluence_fig}
\end{figure}

\begin{figure}
    \centering
    \includegraphics[width = \textwidth]{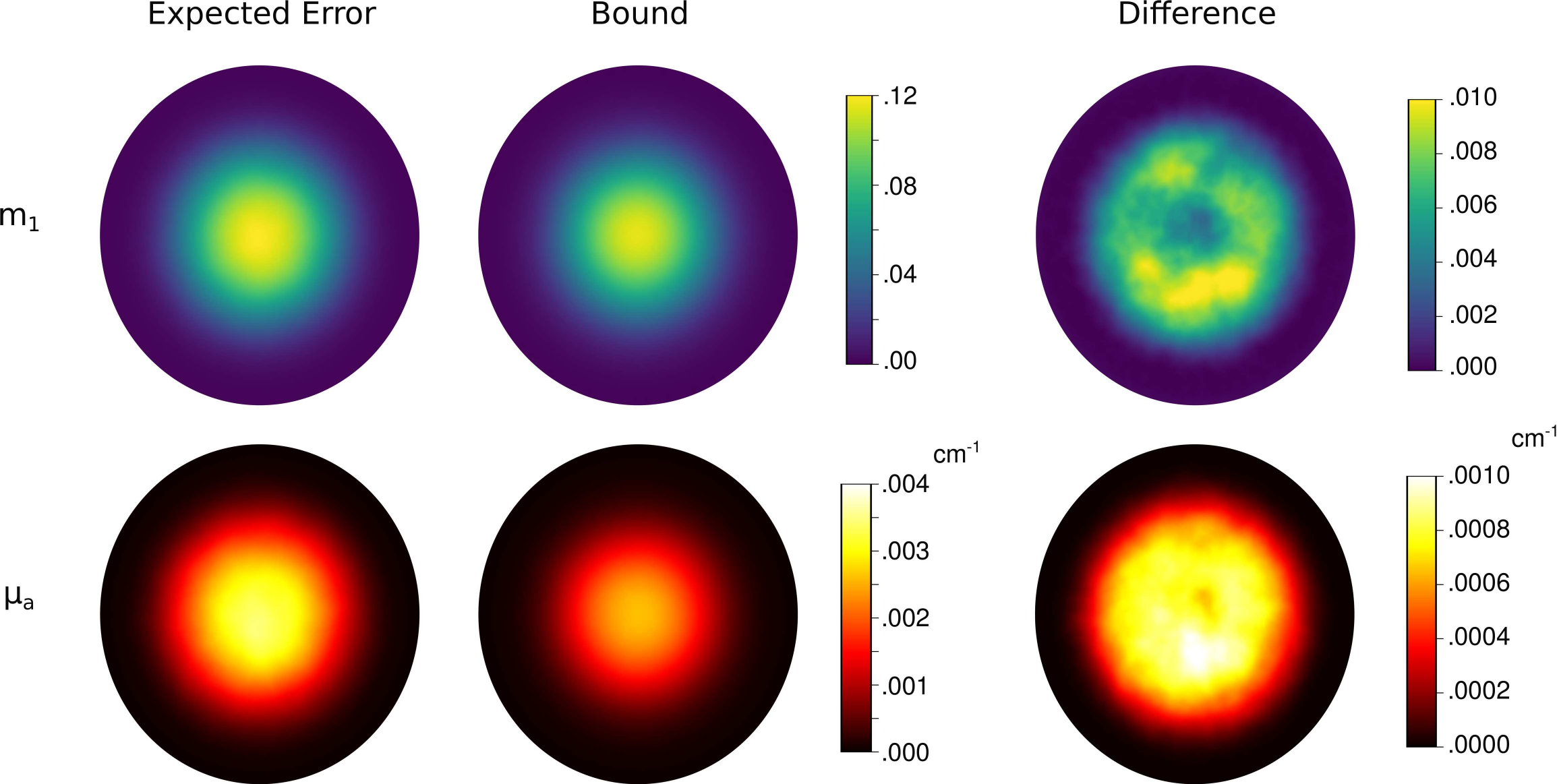}
    \caption{Study 1: Pointwise mean square error (MSE) and the corresponding Bayesian CRB bound for the parameter $m_1$ (top row) and the optical absorption coefficient $\mu_a$ (bottom row) in the single-illumination regime. The difference between the pointwise MSE and the bound is shown in the last column. As can be seen, the Bayesian CRB provides a tight lower bound on the error in $m_1$. The bound is less sharp for $\mu_a$ but still captures the general spatial structure of the error.}
    \label{fig:recon_vs_bound_val}
\end{figure}

\begin{figure}
    \centering
    \includegraphics[width = \textwidth]{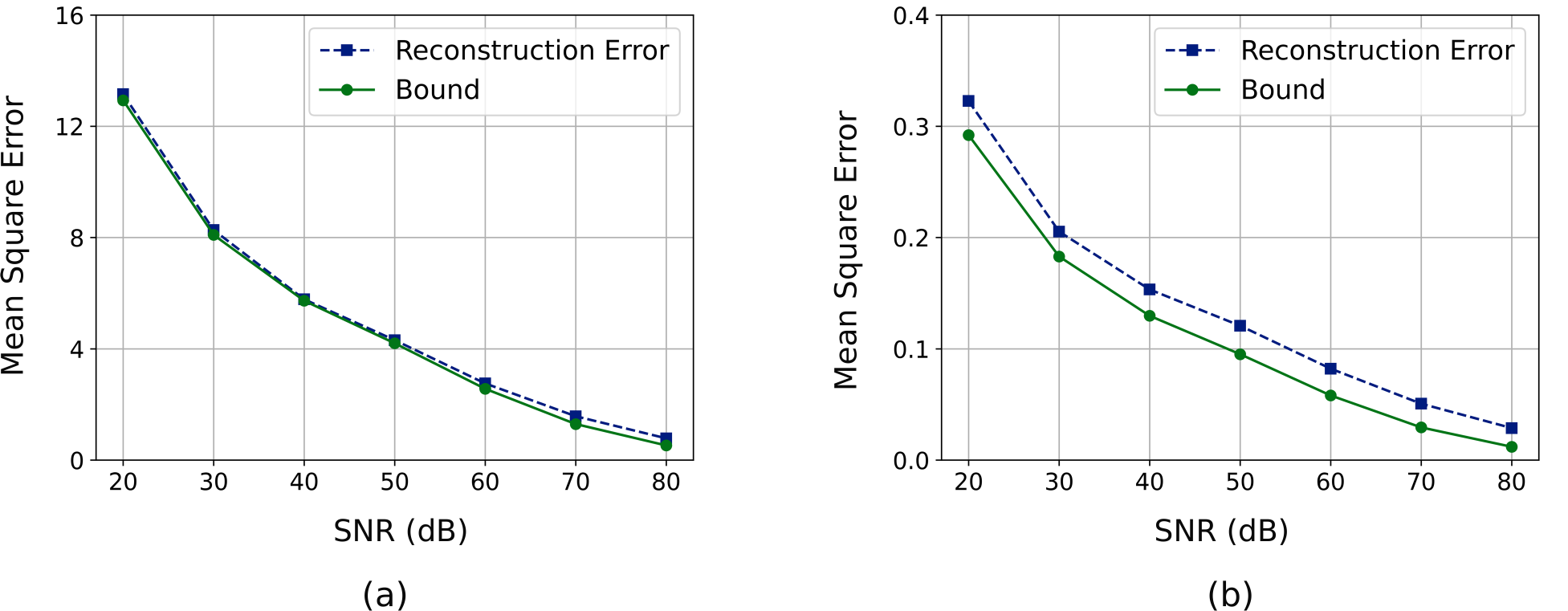}
    \caption{Study 1: The MSE and the corresponding Bayesian CRB bound as a function of signal-to-noise ratio (SNR) for the (a) the parameter $m_1$ and (b) the absorption coefficient $\mu_a$. Note that the results shown in Figures \ref{fig:mua_recon_val} and \ref{fig:recon_vs_bound_val}, which use noise variance $\sigma^2 = 10^{-3}$, correspond to an SNR value of approximately $64.46 \; \mathrm{dB}$, and that here the reconstruction error was estimated using $1000$ inverse problem solves to lessen the computational burden. The Bayesian CRB provides a very tight bound on the MSE of $m_1$ and a fairly tight lower bound on the MSE of $\mu_a$ for a wide range of SNR levels. }
    \label{fig:snr_figure}
\end{figure}

\subsection{Study 2: Known Scattering Coefficient}

We now demonstrate the performance of the proposed approach in ranking multi-illumination design schemes when the scattering coefficient is known exactly. Figure \ref{fig:mua_recon} shows reconstruction results for the contiguous and interlaced design schemes in this setting. As can be seen, both schemes capture the general features of the unknown parameter, with the interlaced scheme providing smaller reconstruction error. 

Figure \ref{fig:bound_scatknown} shows the pointwise MSEs and corresponding bounds for $m_1$ and $\mu_a$ under the contiguous and interlaced design schemes. The corresponding MSE values are given in Table \ref{tab:MSE_scatKnown}. The results show that the interlaced scheme provides lower reconstruction error for both $m_1$ and $\mu_a$. This observation is accurately predicted by the proposed approach, as the Bayesian CRB bound on the MSE in $\mu_a$ is $0.0342$ for the interlaced scheme and $0.0444$ for the contiguous scheme, indicating that the interlaced scheme is better. Further, the proposed approach is able to accurately predict the region of the domain in which the interlaced scheme provides better performance, as can be seen in the last column of the figure. In particular, the proposed approach accurately predicts that this region is roughly diamond-shaped, with the biggest gap between the two schemes being found at the corners of the diamond, which are not illuminated as strongly under the contiguous scheme (see Figure \ref{fig:design_scheme}). All of this is accomplished at a fraction of the computation cost of computing the reconstruction error: While the bound computation took under 2 hours on the Intel Gold processor, computing the reconstruction error required over 500 CPU core-hours on the supercomputer.  

\begin{figure}
    \centering
    \includegraphics[width = .85\textwidth]{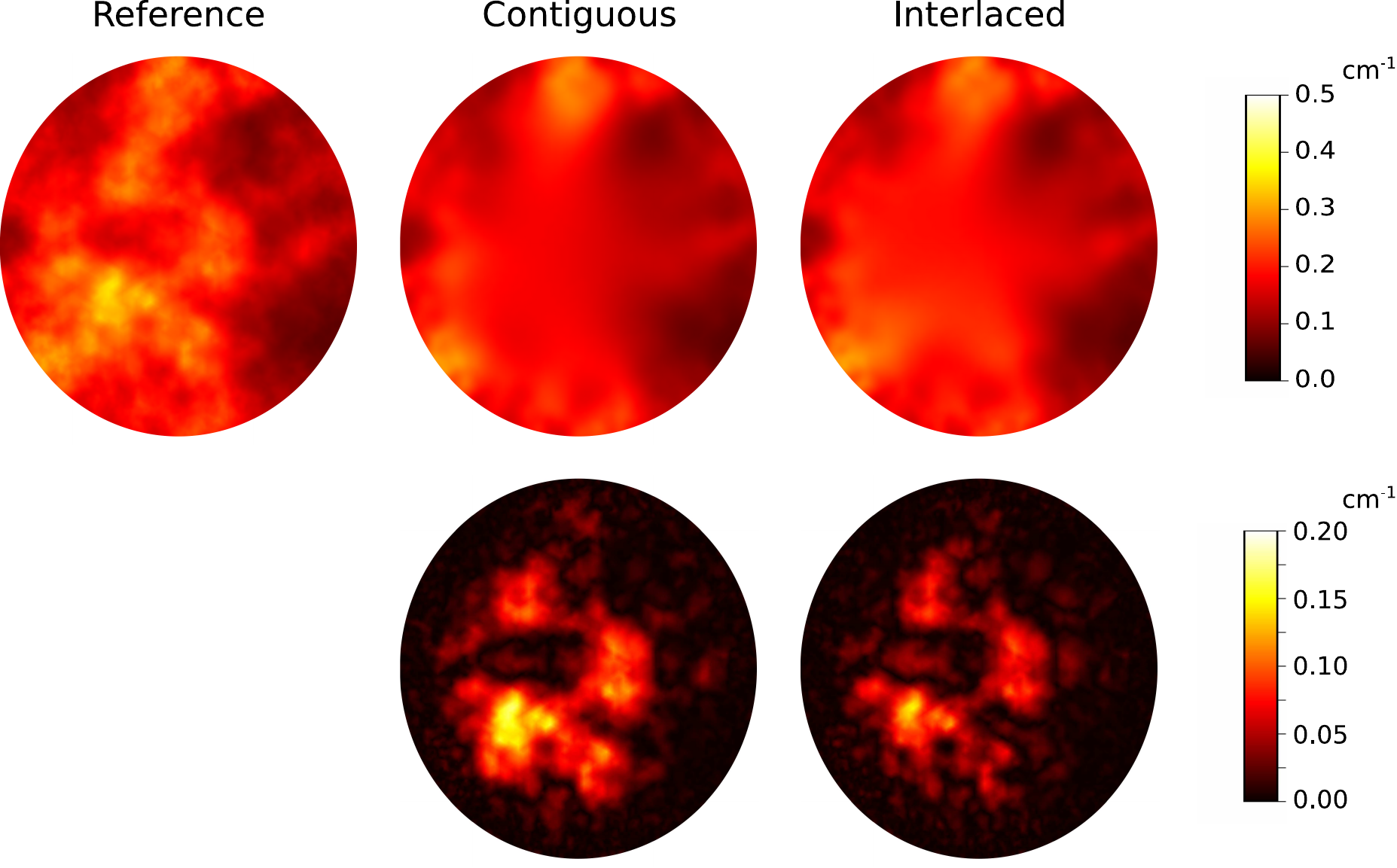}
    \caption{Study 2: Reconstructions of the absorption coefficient obtained under the contiguous and interlaced design schemes (top row) and corresponding error maps (bottom row) in the known scattering coefficient setting. The ground truth absorption coefficient is included for reference. As can be seen, both design schemes yield accurate reconstruction results near the boundary of the domain, but the interlaced scheme leads to better reconstruction performance in the interior. }
    \label{fig:mua_recon}
\end{figure}

\begin{figure}
    \centering
    \includegraphics[width = \textwidth]{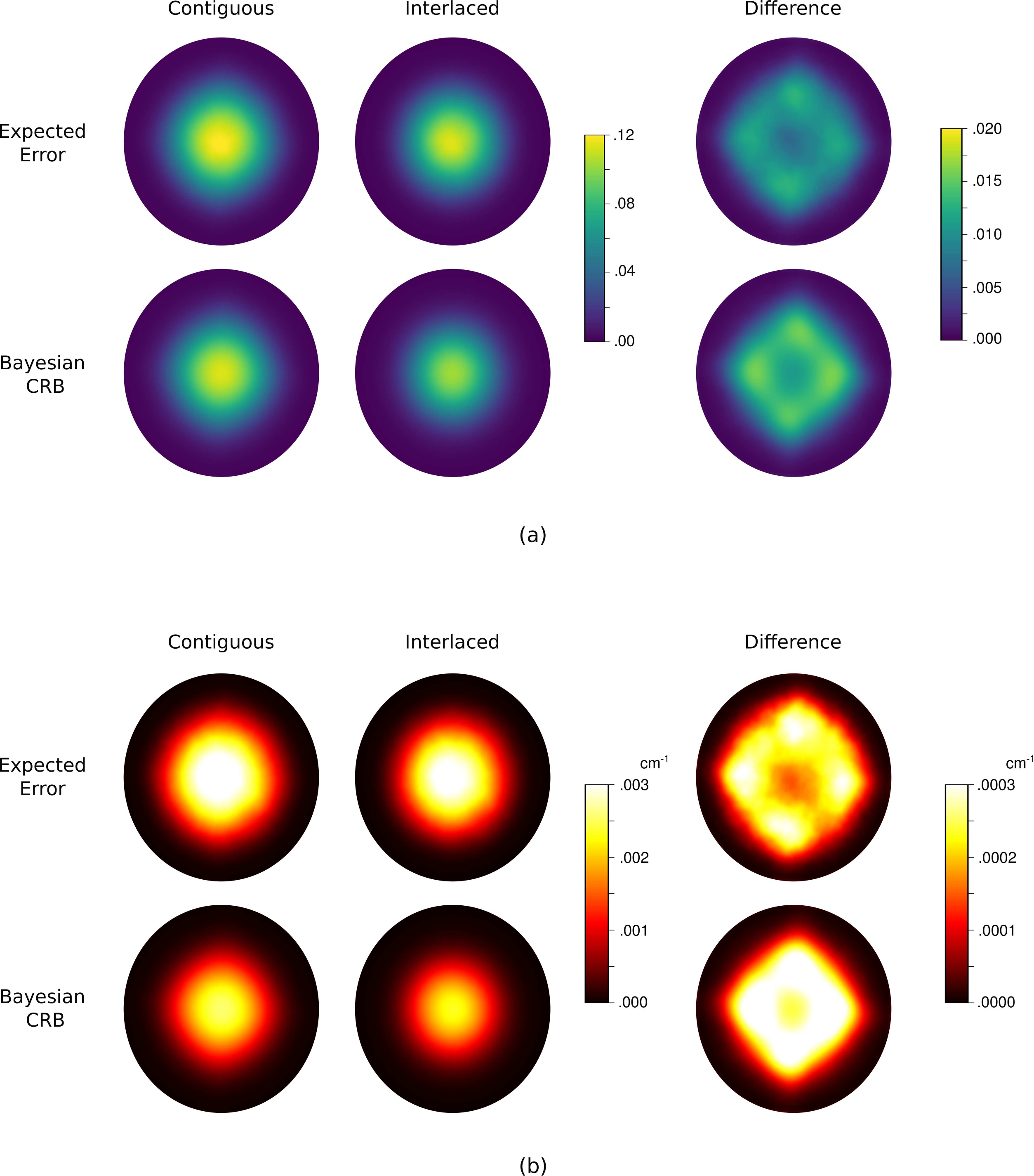}
    \caption{Study 2: Pointwise MSEs and corresponding Bayesian CRB bounds for (a) the parameter $m_1$ and (b) the optical absorption coefficient $\mu_a$. Here results are shown for both the contiguous and interlaced design schemes, with the difference between the contiguous and interlaced results displayed in the last column. As can be seen, the bound accurately predicts that the interlaced scheme will provide superior reconstruction performance and captures the spatial structure of the reconstruction error.}
    \label{fig:bound_scatknown}
\end{figure}

\begin{table}
\centering
    \begin{tabular}{c|cccc}
    \rowcolor{my_gray} &  MSE ($m_1$)   & Bound ($m_1$)   &  MSE ($\mu_a$)    & Bound ($\mu_a$)    \\ \hline
     Contiguous         &  2.1686 & 1.9580 & 0.0680 & 0.0444 \\
    Interlaced         & 1.7827 & 1.5090 & 0.0580 & 0.0342 \\ \hline
    \end{tabular}
    \vspace{0.15in}
    \caption{Study 2: The MSE and corresponding Bayesian CRB bound for the latent parameter $m_1$ and the absorption coefficient $\mu_a$ under the contiguous and interlaced design schemes in the known scattering coefficient setting. The Bayesian CRB based design metric accurately predicts the best performing design scheme, and the bound on $m_1$ is fairly tight. The bound on $\mu_a$ is less tight but still correlates with the reconstruction error. }
    \label{tab:MSE_scatKnown}
\end{table}

\subsection{Study 3: Extension to Unknown Scattering Coefficient}

Here we demonstrate the performance of the proposed approach in the setting where the reduced scattering coefficient is also unknown but is treated as a nuisance parameter. We first show representative reconstruction results. Figure \ref{fig:joint_recon} shows a single i.i.d. sample from the joint distribution of $\mu_a$ and $\mu_s'$ and the corresponding reconstructions under the contiguous and interlaced design schemes. As can be seen, the reconstruction results for $\mu_a$ are similar to those in the known scattering coefficient case, while the scattering coefficient, which, unlike the absorption coefficient, does not directly appear in the definition of the absorbed energy $h$, has slightly higher error levels. 

Figure \ref{fig:bound_scatunknown} shows the pointwise MSEs and corresponding bounds for $m_1$ and $\mu_a$ in the unknown scattering coefficient setting. The corresponding MSE summary statistics are shown in Table \ref{tab:MSE_scatunKnown}. Here we see that the bound is slightly less sharp than in the known scattering coefficient case, and unlike the known scattering coefficient case, the bound does not fully capture the structure of the difference in performance between the contiguous and interlaced schemes. However, in general, the performance here is similar to the performance in the known scattering coefficient case, demonstrating the robustness of the proposed approach. In fact, the Bayesian CRB bound on the MSE of  $\mu_a$ is $0.0481$ for the interlaced scheme and $0.0603$ contiguous scheme, which correctly predicts that the interlaced scheme will provide superior performance, and that the overall error will be higher than in the known scattering coefficient setting. Further, the computational savings are even more drastic here than in the known scattering coefficient setting: Computing the expected reconstruction error took over 3000 CPU core-hours, while computing the bound took under 4 hours.  This further demonstrates the viability of the proposed approach in guiding the design and optimization of qPACT systems. 

\begin{figure}
    \centering
    \includegraphics[width = \textwidth]{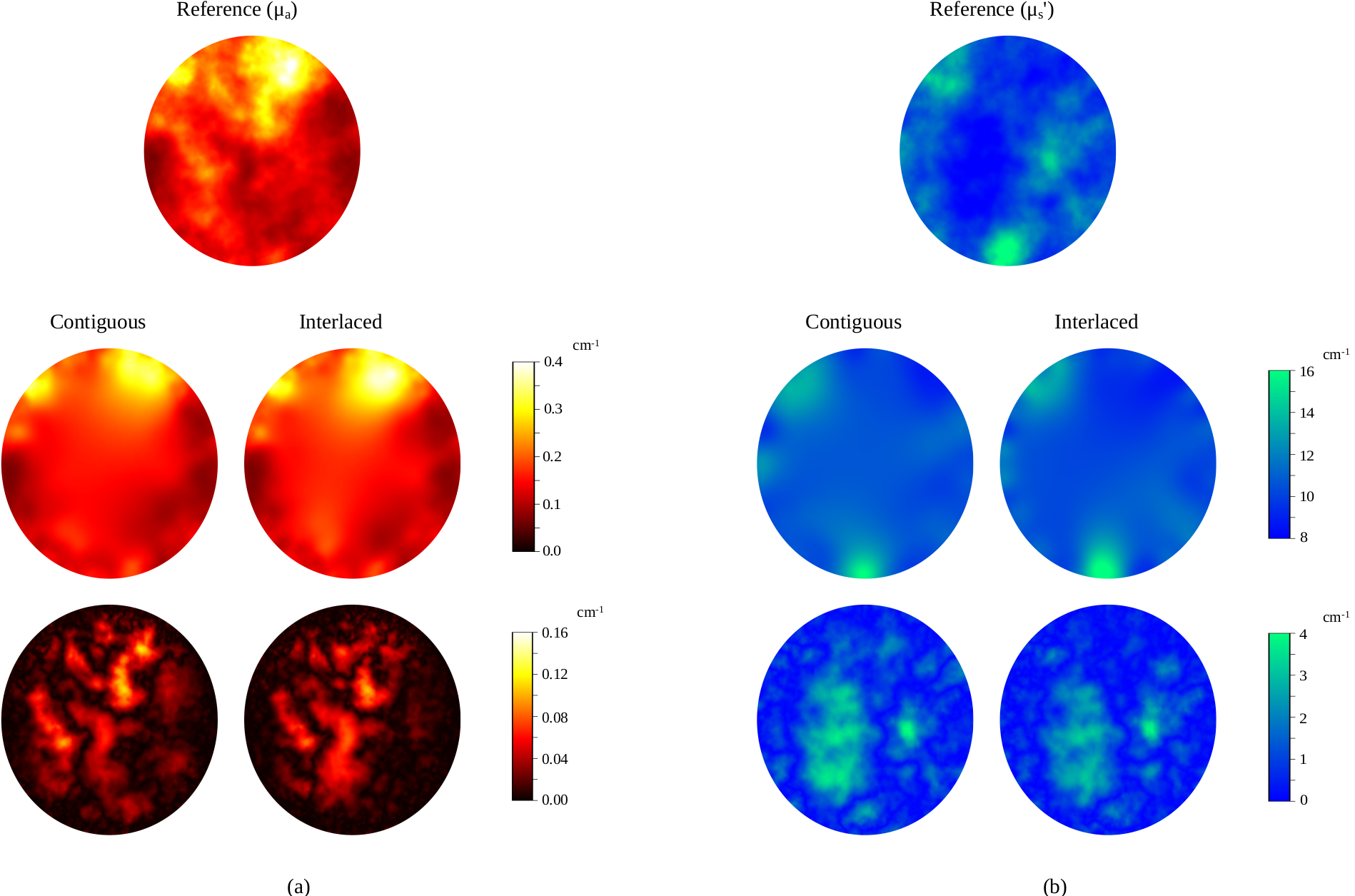}
    \caption{Study 3: Reconstructions (middle) and error maps (bottom) of the (a) absorption coefficient and (b) reduced scattering coefficient under the contiguous and interlaced design schemes. The ground truth absorption and reduced scattering coefficients (a single sample from their joint distribution) are included for reference. As can be seen, both design schemes are able to capture some key features of the unknown parameters. However, there are significant reconstruction errors, especially in the reduced scattering coefficient, due to the ill-posedness of the problem. }
    \label{fig:joint_recon}
\end{figure}

\begin{figure}
    \centering
    \includegraphics[width = \textwidth]{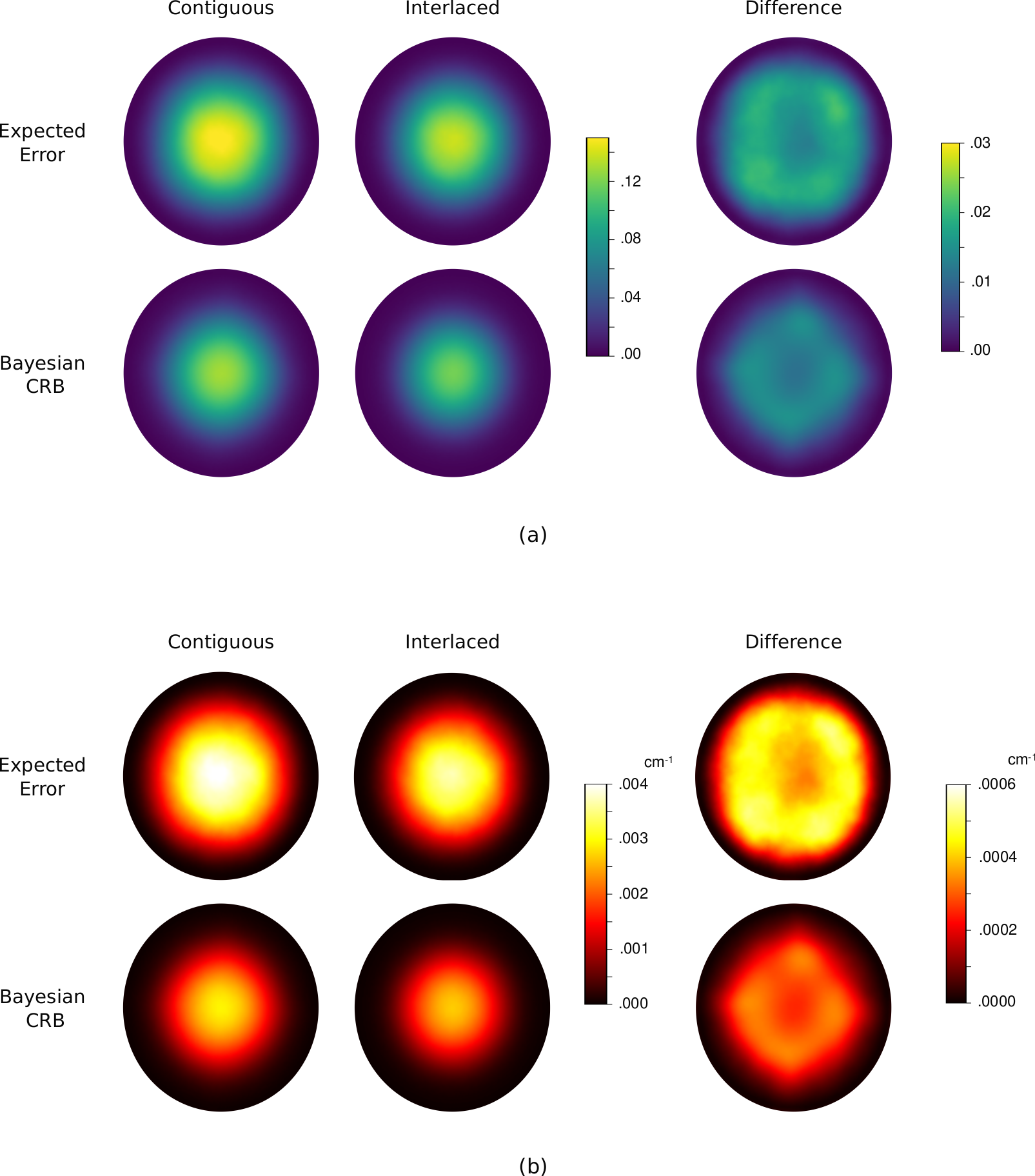}
    \caption{Study 3: Pointwise MSEs and corresponding Bayesian CRB bounds for (a) the parameter $m_1$ and (b) the corresponding optical absorption coefficient $\mu_a$ in the unknown scattering coefficient setting. Here, as in Figure \ref{fig:bound_scatknown}, results are shown for both the contiguous and interlaced design schemes, with the difference between the contiguous and interlaced results displayed in the last column. Just as in the known scattering coefficient case, the bound accurately predicts that the interlaced scheme will provide superior reconstruction performance, although the bound is less tight in this setting. }
    \label{fig:bound_scatunknown}
\end{figure}

\begin{table}
\centering
    \begin{tabular}{c|cccc}
    \rowcolor{my_gray} &  MSE ($m_1$)   & Bound ($m_1$)   &  MSE ($\mu_a$)    & Bound ($\mu_a$)    \\ \hline
     Contiguous         &  3.8458 & 2.6769 & 0.1100 & .0603 \\
    Interlaced    & 3.0168  & 2.1347 & 0.0874 &  0.0481 \\ \hline
    \end{tabular}
    \vspace{0.15in}
    \caption{Study 3: The MSE and corresponding Bayesian CRB bound for the latent parameter $m_1$ and the absorption coefficient $\mu_a$ under the contiguous and interlaced design schemes in the unknown scattering coefficient setting. While the bound in this setting is not tight, it still correlates with the reconstruction error and accurately predicts the best performing design scheme. }
    \label{tab:MSE_scatunKnown}
\end{table}

\section{Discussion}
\label{sec:discussion}

The optimal experimental design of qPACT imaging systems is challenging due to the infinite-dimensional function space setting, the presence of unknown but clinically irrelevant nuisance parameters, and the nonlinearity of the forward model. We have demonstrated that despite these challenges, the proposed OED framework can provide reliable and computationally efficient analysis of the performance of qPACT system designs and that the method is robust to nuisance parameters. Our approach relies on the Bayesian CRB to characterize the quality of a given design scheme. In particular, our approach incorporates prior distributions with trace-class covariance operators, variational adjoint based derivative computation methods, and a post-marginalization technique to address nuisance parameters. These techniques address the unique challenges of qPACT experimental design and enable efficient and accurate estimation of the bound. 


The proposed approach was analyzed using numerical studies under a stylized model of qPACT imaging systems in two space dimensions. While, as previously discussed, the two-dimensional geometry can be motivated by the quasi two-dimensionality of certain qPACT imaging systems, future work shall extend the proposed approach to three-dimensional geometries. Further, while the multi-physics forward model, which incorporates a novel model for PACT light delivery subsystems, captures several key features of the qPACT inverse problem (e.g., the presence of nuisance parameters and the non-linearity of the inverse problem), several aspects of the forward model require further investigation. These include the assumption of a lossless and acoustically homogeneous medium, that the Gr{\"u}neisen parameter is known, and the use of the diffusion approximation of the RTE to model the fluence distribution. Here the use of more sophisticated acoustic and optical models should be investigated in future work. In particular, the use of more sophisticated acoustic models that account for acoustic heterogeneity, attenuation, and dispersion (including spatially-varying and possibly unknown speed of sound and acoustic attenuation distributions) has received considerable attention in the qPACT literature in recent years \cite{Poudel_2020, matthews_2018,poudel2019survey, ranjbaran2023quantitative, kirchner2021multiple,jin2006thermoacoustic,liang2021acoustic}. While incorporating more complex physical models of wave propagation would increase the computational cost of applying the forward model, the proposed framework can be readily extended to this setting and can account for uncertainty in the acoustic properties. In particular, unknown speed of sound or acoustic attenuation distributions could be incorporated as an unknown nuisance parameter in the problem formulation \cite{GangwonVillaAnastasio24,huang2016joint} (similar to the way the reduced scattering coefficient is treated) or marginalized out in the definition of the likelihood function using the Bayesian approximation error (BAE) approach \cite{kaipio2013approximate,kaipio2006statistical,kaipio2007statistical}. Similarly, the assumption that the Gr{\"u}neisen parameter is known, which is common in the qPACT literature \cite{cox2009challenges}, could be relaxed by either incorporating the Gr{\"u}neisen parameter as an additional nuisance parameter in the inverse problem or marginalizing out in the likelihood function.

Another possible limitation of this work is the use of log-Gaussian prior distributions with covariance operators based on diffusion-reaction PDEs. 
This image prior does not capture the full complexity of biological tissues, including the correlations between the optical absorption and reduced scattering coefficients induced by the underlying anatomy and the complex geometry of vasculature networks that can be imaged with qPACT. Future research shall explore the use of more advanced priors to increase the practical relevance of the proposed work for clinical and preclinical qPACT applications. On the one hand, more sophisticated---but still ``hand-crafted''---priors could be explored to capture sharp transitions found in biological tissue, such as infinite-dimensional level-set priors \cite{iglesias2016bayesian}. On the other, generative artificial intelligence and data-driven priors could also be investigated. In particular, in future work we plan to use stochastically generated ensembles of anatomically realistic numerical breast phantoms \cite{park2023stochastic} and denoising diffusion models \cite{croitoru2023diffusion} for data-driven Bayesian CRB estimation \cite{craftsBayesian2023}.

It is also worth noting that while the Bayesian CRB provides a fairly tight lower bound on the error in estimating the latent inversion parameter, the bound becomes less tight after change-of-variable transformations and nuisance parameter post-marginalization. While we have shown that our approach is still capable of accurately predicting the design scheme with the lowest reconstruction error in the presence of these complications, future work shall investigate other bounds on the reconstruction error in Bayesian inverse problems, such as the Ziv-Zakai bound \cite{ziv1969some} or Bobrovsky-Zakai bound \cite{bobrovsky1976lower}, for use in qPACT OED. Also of interest is the extension to goal-oriented OED \cite{wu2023offline}, where the quantity of interest is a function of the qPACT parameter map (e.g., the output of a numerical observer \cite{barrett1993model}), rather than the parameter map itself.

\section{Conclusions}
\label{sec:conclusion}

In this work, we introduced a novel optimal experimental design approach for qPACT imaging systems. Our approach uses Bayesian CRB based design metrics to measure the quality of qPACT design schemes and guide their optimization. The computation of these metrics is complicated by the infinite-dimensional function space setting of qPACT and the presence of unknown but clinically irrelevant nuisance parameters in the qPACT inverse problem. In our approach, these challenges are addressed through the incorporation of prior distributions with trace-class covariance operators, variational adjoint based derivative computation methods, and a post-marginalization technique to address nuisance parameters. The resulting design metric is computationally efficient to compute and independent of the choice of estimator used to solve the qPACT inverse problem, and was shown to provide reliable and computationally efficient analysis of the performance of qPACT system designs. 



\section*{Data availability statement}
No new data were created or analysed in this study. Numerical results were generated using the open-source inverse problem library hIPPYlib available from \url{hippylib.github.io} under GNU General Public License v2.

\section*{Acknowledgements}
This work was supported by the National Institute of Biomedical Imaging and Bioengineering of the National Institutes of Health under award numbers R01EB034261 and R01EB031585.

\section*{References}

\appendix 

\section{Well-Posedness of the Bayesian CRB}
\label{sec:appendix_wellposed}

In this appendix, we introduce assumptions on the discretized forward operator $g(\bsm)$ in Eq. \eqref{eq:joint_probability} 
and the estimator $\hat{\bsm}(\bsy)$ that are sufficient to ensure well-posedness of the Bayesian CRB bound in Eq. \eqref{bound}. These assumptions are as follows. 
\begin{assumption}
\label{assumption:g_growth}
The forward model $\bsg(\bsm)$ satisfies the following conditions.

\begin{enumerate}
    \item For every $\epsilon > 0$, there exists $M = M(\epsilon)$ such that 
    $$
        \left\| \bsg(\bsm) \right\|_{\boldsymbol{\Sigma}_\bsz^{-1}} \leq \exp\{ \varepsilon \|\bsm \|^2_{\mathbf{C}^{-1}} + M\}
    $$
    for all $\bsm \in \mathbb{R}^N$. 
    \item For every $\epsilon > 0$, there exists an $K = K(\epsilon)$ such that
    $$
    \| g(\bsm_1) - g(\bsm_2) \|_{\boldsymbol{\Sigma}_\bsz^{-1}} \leq \exp\{\epsilon r^2+ K(\epsilon) \}  \| \bsm_1 - \bsm_2 \|_{\mathbf{C}^{-1}}
    $$
    for all $\bsm_1, \bsm_2 \in \mathbb{R}^N$, where $r \triangleq \max\{ \| \bsm_1 \|_{\mathbf{C}^{-1}}, \| \bsm_2 \|_{\mathbf{C}^{-1}} \}$.
\end{enumerate}
\end{assumption}

\begin{assumption}
\label{assumption:b_growth}
The conditional error $B(\bsm) \triangleq \mathbb{E}_{\bsy \mid \bsm} \left [\hat{\bsm}(\bsy) - \bsm \right]$ of the estimator $\hat{\bsm}(\bsy)$ satisfies
\begin{equation*}
\| B(\bsm) p(\bsm) \|_{\mathbf{C}^{-1}} \to 0\, \text{ as } \| \bsm \|_{\mathbf{C}^{-1}} \to \infty.
\end{equation*}
\end{assumption}

The above technical assumptions are very mild and often hold in practice.
Assumption \ref{assumption:g_growth}, which is a minor modification of Assumption 2.7 in \cite{stuart2010inverse}, can be shown to hold for a large class of forward mappings $\bsg(\bsm)$ that involve the solution of elliptic PDEs \cite{stuart2010inverse}, including the diffusion approximation of photon transport in Eq. \eqref{eq:diffuse}. Assumption \ref{assumption:b_growth} is trivially satisfied by any estimator such that $\|\hat{\bsm}(\bsy)\|_{\mathbf{C}^{-1}}$ is bounded. 

The well-posedness of the Bayesian CRB can now be proven under these assumptions.

\begin{theorem}
Under Assumptions \ref{assumption:g_growth} and \ref{assumption:b_growth}, the Bayesian CRB on the parameter $\bsm$, i.e., $\bsV_{\bsm}$, exists and satisfies the information inequality in \eqref{bound}.
\end{theorem}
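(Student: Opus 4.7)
The plan is to follow the classical Van Trees derivation of the Bayesian CRB. The target inequality will emerge from a matrix Cauchy--Schwarz (equivalently, Schur complement) argument applied to two random vectors: $\bsx \triangleq \hat{\bsm}(\bsy) - \bsm$ and $\bsu \triangleq \nabla_{\bsm}\log p(\bsm, \bsy)$. For this I need two ingredients: (i) finiteness and invertibility of $\bsJ_{\bsm} = \mathbb{E}[\bsu\bsu^T]$, and (ii) the cross-expectation identity $\mathbb{E}[\bsx\bsu^T] = \mathbf{I}$. Given both, positive semi-definiteness of the augmented covariance matrix of $(\bsx^T, \bsu^T)^T$ combined with the Schur complement immediately yields $\mathbb{E}[\bsx\bsx^T] \succcurlyeq \mathbf{I}\,\bsJ_{\bsm}^{-1}\,\mathbf{I} = \bsJ_{\bsm}^{-1}$, which is precisely the claim \eqref{bound}.

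For step (i), I would invoke the decomposition $\bsJ_{\bsm} = \bsJ_P + \bsJ_D$ from \eqref{eq:Jb_exact}. Since $\bsJ_P = \bsC^{-1}$ is strictly positive definite by construction of the Gaussian prior and $\bsJ_D$ is positive semi-definite as a Gramian-type expectation, their sum is strictly positive definite and hence invertible, provided $\bsJ_D$ is finite. Finiteness of $\bsJ_D$ amounts to bounding $\mathbb{E}[\|\nabla_{\bsm}\log p(\bsy \mid \bsm)\|^2]$. Since $\nabla_{\bsm}\log p(\bsy \mid \bsm) = \nabla_{\bsm}\bsg(\bsm)^T \boldsymbol{\Sigma}_{\bsz}^{-1}(\bsy - \bsg(\bsm))$, Assumption \ref{assumption:g_growth}(2) gives local Lipschitz control of $\bsg$ with subgaussian growth of the Lipschitz constant, and conditionally on $\bsm$ the residual $\bsy - \bsg(\bsm)$ is mean-zero Gaussian with fixed covariance $\boldsymbol{\Sigma}_{\bsz}$. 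Combining these with the exponential decay of the Gaussian prior supplies the finite moment.

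For step (ii), I would compute $\mathbb{E}[\bsx\bsu^T]$ by first integrating over $\bsy$ via Fubini and then applying the divergence theorem in $\bsm$. Splitting $\bsx$ into the $\hat{\bsm}(\bsy)$ piece and the $-\bsm$ piece, the $\hat{\bsm}$-contribution reduces (after the $\bsy$-integration collapses $\int \hat{\bsm}(\bsy)\,p(\bsy\mid \bsm)\,d\bsy = B(\bsm) + \bsm$) to the boundary flux at infinity of $(B(\bsm) + \bsm)\,p(\bsm)$; the $\bsm\,p(\bsm)$ piece vanishes by the Gaussian tails of the prior, and the $B(\bsm)\,p(\bsm)$ piece vanishes by precisely Assumption \ref{assumption:b_growth}. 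The $-\bsm$-contribution, by another integration by parts, produces $\mathbf{I}$ from the interior term $\nabla_{\bsm}^T \bsm = \mathbf{I}$, with its boundary term again controlled by Gaussian decay. Summing yields $\mathbb{E}[\bsx\bsu^T] = \mathbf{I}$.

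The main technical obstacle is the rigorous justification of the interchange of differentiation, Fubini, and integration by parts in step (ii): one must verify that all relevant integrals are absolutely convergent, that $p(\bsm,\bsy)$ is sufficiently smooth in $\bsm$ with integrable partial derivatives, and that the boundary fluxes at infinity really vanish. Assumption \ref{assumption:g_growth}(1) provides the a priori likelihood tail control required to apply Fubini and dominated convergence---paralleling the well-posedness arguments in \cite{stuart2010inverse}---while Assumption \ref{assumption:b_growth} is exactly what is needed to kill the estimator-dependent boundary term. Once these tail estimates are in place, the remaining calculation is a routine application of the divergence theorem followed by the matrix Cauchy--Schwarz inequality.
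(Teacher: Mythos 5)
Your proposal is correct and follows essentially the same route as the paper: both reduce the theorem to the classical Van Trees argument (matrix Cauchy--Schwarz plus the cross-expectation identity $\mathbb{E}[(\hat{\bsm}(\bsy)-\bsm)\nabla_{\bsm}\log p(\bsm,\bsy)^T]=\mathbf{I}$, with Assumption \ref{assumption:b_growth} killing the estimator-dependent boundary term) and both extract the needed regularity of $p(\bsm,\bsy)$ from Assumption \ref{assumption:g_growth}. The only difference is one of emphasis: you spell out the Schur-complement machinery and defer the integrability estimates to ``routine tail control,'' whereas the paper's sketch devotes nearly all of its effort to exactly those estimates (Lipschitz-implies-a.e.-differentiable, the explicit bound $\|\nabla\bsg(\bsm)\|_2\le C_0\exp\{\epsilon\|\bsm\|_2^2+K(\epsilon)\}$, and Fernique's theorem) and cites Van Trees for the rest.
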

\begin{proof}[Proof Sketch]

We first show that Assumption \ref{assumption:g_growth} implies that the partial derivatives $\partial p(\bsm, \bsy) / \partial \bsm_i$, $i =1 , \dots, N$, exist and are absolutely integrable. Here existence follows from the fact that since $\bsg(\bsm)$ is Lipschitz by Part (ii) of the assumption, $p(\bsm, \bsy)$ is also Lipschitz in $\bsm$, and Lipschitz functions are differentiable almost everywhere (see, e.g., Theorem 3.4.1 in \cite{cobzacs2019lipschitz}). 

Absolute integrability will be satisfied if
\begin{equation}
\label{eq:t_i}
t_i \triangleq \int_{\mathbb{R}^N} \int_{\mathbb{R}^{DI}} \left | \frac{ \partial p(\bsm, \bsy) }{\partial \bsm_i} \right | d\bsy \; d\bsm 
\end{equation}
is finite for $i=1, \dots, N$.
Using the product rule and the definition of the joint distribution $p(\bsm, \bsy)$ in Eq. \eqref{eq:joint_probability}, we obtain the following expression for the partial derivatives:
\begin{align}
\label{eq:partial}
\frac{ \partial p(\bsm, \bsy) }{\partial \bsm_i} &= \frac{\partial p(\bsy \mid \bsm) }{\partial \bsm_i} \; p(\bsm) + p(\bsy \mid \bsm) \; \frac{\partial p(\bsm)}{\partial \bsm_i} \\
\nonumber &= p(\bsm) p(\bsy \mid \bsm)  \left [ \{ \nabla \bsg(\bsm) \} \boldsymbol{\Sigma}_\bsz^{-1}(\bsy - \bsg(\bsm)) \right ]_{i} - p(\bsm) p(\bsy \mid \bsm) \left [ \bsC^{-1} \bsm \right ]_{i},
\end{align}
where $[ \boldsymbol{v} ]_i$ denotes the $i$-th component of the vector $\boldsymbol{v} \in \mathbb{R}^N$.
Substituting Eq. \eqref{eq:partial} in Eq. \eqref{eq:t_i}, we obtain
$$
t_i = \int_{\mathbb{R}^N} p(\bsm) \int_{\mathbb{R}^{DI}} p(\bsy \mid \bsm) \left | \left [ \{ \nabla \bsg(\bsm) \} \boldsymbol{\Sigma}_\bsz^{-1}(\bsy - \bsg(\bsm)) - \bsC^{-1} \bsm \right ]_{i}  \right | d\bsy \; d\bsm. 
$$

By Fernique's Theorem \cite{da2014stochastic}, the condition $t_i < +\infty$ is guaranteed to hold if, for any $\alpha > 0$, there exists a constant $C$ such that 
\begin{equation}
\label{eq:b_i}
b_i \triangleq \int_{\mathbb{R}^{DI}} p(\bsy \mid \bsm) \left | \left [\{ \nabla \bsg(\bsm) \} \boldsymbol{\Sigma}_\bsz^{-1}(\bsy - \bsg(\bsm)) - \bsC^{-1} \bsm \right ]_{i}  \right | d\bsy \leq C \, \mathrm{exp}( \alpha || \bsm ||_2^2). 
\end{equation}
Using the triangle inequality and the independence of $ \bsC^{-1} \bsm$ on $\bsy$, we have that
\begin{align}
\label{eq:b_i_triangle}
b_i &\leq \int_{\mathbb{R}^{DI}} p(\bsy \mid \bsm) \left | \left [ \bsC^{-1} \bsm \right ]_{i} \right | d \bsy + \int_{\mathbb{R}^{DI}} p(\bsy \mid \bsm) \left | \left [ \{ \nabla \bsg(\bsm) \} \boldsymbol{\Sigma}_\bsz^{-1}(\bsy - \bsg(\bsm))\right ]_{i} \right | d \bsy \nonumber \\
&= \left | \left [ \bsC^{-1} \bsm \right ]_{i} \right | + \int_{\mathbb{R}^{DI}} p(\bsy \mid \bsm) \left | \left [ \{ \nabla \bsg(\bsm) \} \boldsymbol{\Sigma}_\bsz^{-1}(\bsy - \bsg(\bsm))\right ]_{i} \right | d \bsy. 
\end{align}
We next note that 
\begin{align*}
\left | \left [ \{ \nabla \bsg(\bsm) \} \boldsymbol{\Sigma}_\bsz^{-1}(\bsy - \bsg(\bsm))\right ]_{i} \right | &\leq \left \| \{ \nabla \bsg(\bsm) \} \boldsymbol{\Sigma}_\bsz^{-1}(\bsy - \bsg(\bsm))\right  \|_2 \\
&\leq \left \| \nabla \bsg(\bsm) \right \|_2 \left \| \boldsymbol{\Sigma}_\bsz^{-1} \right \|_2 \left \| \bsy - \bsg(\bsm) \right \|_2,
\end{align*}
so we have that 
\begin{equation}
\label{eq:b_almost}
b_i \leq \left | \left \{ \bsC^{-1} \bsm \right \}_{i} \right | + \left \| \nabla \bsg(\bsm) \right \|_2 \left \| \boldsymbol{\Sigma}_\bsz^{-1} \right \|_2 \int_{\mathbb{R}^{DI}} p(\bsy \mid \bsm) \left \| \bsy - \bsg(\bsm) \right \|_2 d \bsy.
\end{equation}
The integral in the above expression is finite and independent of $\bsm$. What remains is to bound $ \left \| \nabla \bsg(\bsm) \right \|_2$. We now claim that there exists a constant $C_0$ (independent of $\bsm$) such that
\begin{equation}
\label{eq:jac_bound}
\| \nabla g(\bsm) \|_{2} \leq C_0 \exp\{\epsilon \| \bsm \|_2^2+ K(\epsilon) \}
\end{equation}
for all $\epsilon > 0$, where $K(\epsilon)$ is defined by Assumption \ref{assumption:g_growth}. This claim can be verified using the fact that Part (ii) of Assumption \ref{assumption:g_growth} implies Lipschitzness of the partial derivatives of $\bsg(\bsm)$ (see, e.g., Proposition 2.2.1 in \cite{cobzacs2019lipschitz}), together with the equivalence of finite-dimensional norms. Substituting the bound in \eqref{eq:jac_bound} into \eqref{eq:b_almost}, setting $\epsilon = \alpha$, and using the fact that linear functions can be bounded by exponential functions, we can verify that \eqref{eq:b_i} holds. The partial derivatives of $p(\bsm, \bsy)$ are therefore absolutely integrable. 

We have shown that the partial derivatives exist and are absolutely integrable. Using the proof techniques in \cite{bell_2013_detection} (in particular, the proof of Eq. 4.520), the well-posedness of the bound then follows from the absolute integrability of the partials and Assumption \ref{assumption:b_growth}.
\end{proof}

\section{Derivative Computation with the Variational Adjoint Method}
\label{sec:appendix_deriv}

In this appendix, we derive an expression for the score of the qPACT likelihood function in \eqref{eq:Jd_estimator}. Specifically, this appendix uses the variational adjoint method \cite{troltzsch2010optimal} to obtain an expression for the continuous-valued qPACT likelihood score $\nabla_{\bm{m}} \log p(\bsy \mid \bm{m})$, from which $\nabla_{\bsm} \log p(\bsy \mid \bsm)$ can be obtained via discretization. Throughout, we assume that all functions are square-integrable and have square-integrable derivatives, i.e., all functions are in the Sobolev space $H^{1}(\Omega)$. 

We begin by deriving the weak form of the diffusion PDE in \eqref{eq:diffuse}, which is restated below for convenience: 
\begin{equation}
\left\{
\begin{array}{lr}
     \mu_a(\bsx) \phi(\bsx) - \nabla \cdot D(\bsx) \nabla \phi(\bsx) = 0 & \text{for any} \; \bsx \in \Omega, \\[1mm] 
     \left \langle D(\bsx)  \nabla \phi(\bsx), \bseta (\bsx) \right \rangle + \frac{1}{2}\phi(\bsx) = 2q (\bsx) & \text{for any} \; \bsx \in \partial \Omega.
     \label{eq:diffuse_appendix}
\end{array}
\right.
\end{equation}
To that end, we first multiply the system by a test function $\tilde{\phi}(\bsx)$ and integrate over the domain to obtain
\begin{equation}
    \int_{\Omega} \mu_a(\bsx) \phi(\bsx) \tilde{\phi}(\bsx) - (\nabla \cdot D(\bsx) \nabla \phi(\bsx)) \tilde{\phi}(\bsx) \; d\bsx = 0, \label{eq:strong_test_1}  
\end{equation}
for any $\tilde{\phi}(\bsx)$. Integrating by parts the second term in \eqref{eq:strong_test_1} and using the boundary condition in \eqref{eq:diffuse_appendix} gives
\begin{align*}
 &\int_{\Omega} (\nabla \cdot D(\bsx) \nabla \phi(\bsx)) \tilde{\phi}(\bsx) \; d\bsx \\
 & \hspace{3cm} =   \int_{\partial \Omega}  D(\bsx) \left \langle \nabla \phi(\bsx), \bseta(\bsx) \right \rangle \tilde{\phi}(\bsx) \; d\bsx - \int_{\Omega}  \left \langle D(\bsx) \nabla \phi(\bsx), \nabla \tilde{\phi}(\bsx) \right \rangle \; d \bsx \\
 & \hspace{3cm} =   \frac{1}{2} \int_{\partial \Omega} q(\bsx) \tilde{\phi}(\bsx) - \phi(\bsx) \tilde{\phi}(\bsx) \; d\bsx - \int_{\Omega}  \left \langle D(\bsx) \nabla \phi(\bsx), \nabla \tilde{\phi}(\bsx)\right \rangle \; d \bsx.
\end{align*}
Inserting the above expression into \eqref{eq:strong_test_1}, we obtain
\begin{align}
    &\int_{\Omega} \mu_a(\bsx) \phi(\bsx) \tilde{\phi}(\bsx) + \left \langle D(\bsx) \nabla \phi(\bsx), \nabla \tilde{\phi}(\bsx) \right \rangle \; d \bsx + \frac{1}{2} \int_{\partial \Omega} \phi(\bsx) \tilde{\phi}(\bsx) - q(\bsx) \tilde{\phi}(\bsx)  \; d\bsx = 0 
\label{eq:weak_form}
\end{align}
for any $\tilde{\phi}(\bsx) \in H^{1}(\Omega)$. The above equation is the weak form of \eqref{eq:diffuse}.

We now form a Lagrangian function $\mathcal{L}$ that combines the log-likelihood function with the weak form of the diffusion PDE. Here for simplicity we assume that there is only $I = 1$ illumination (the generalization to multiple illuminations is trivial). The Lagrangian is therefore
\begin{align*}
\mathcal{L}\left ( \{\phi, h\}, \{\tilde{\phi}, \tilde{h} \}, \{ m_1, m_2 \} \right) &\triangleq \frac{1}{2}||\mathcal{H}_{a} h - \bsd ||_{\boldsymbol{\Sigma}_{\mathbf{z}}^{-1}}^2  + 
\int_{\Omega} \mu_a \, \phi \, \tilde{\phi} + \left \langle \frac{1}{3(\mu_a + \mu'_s)} \nabla \phi, \nabla \tilde{\phi} \right \rangle \; d \bsx \\
& \quad + \int_{\Omega} h\tilde{h} - \phi \mu_a \tilde{h} \; d\bsx + \frac{1}{2} \int_{\partial \Omega} \phi \tilde{\phi} - q \tilde{\phi}  \; d\bsx,
\end{align*}
where $\mu_a(\bsx) = \overline{\mu_a} e^{m_1(\bsx)}$ and $\mu_s'(\bsx) = \overline{\mu_s'} e^{m_2(\bsx)}$, and the condition $h(\bsx) = \mu_a(\bsx) \phi(\bsx)$ is enforced weakly using the test function $\tilde{h}(\bsx)$.

The score of the likelihood function can be obtained as the first variation of the Lagrangian with respect to the parameter $\bm m$ when the variations with respect to the state $\{\phi, h\}$ and adjoint variable $\{\tilde{\phi, \tilde{h}}\}$ are set to zero. Requiring variations with respect to $\tilde{\phi}$ to vanish recovers the weak form of the PDE, while requiring variations with respect to $\tilde{h}$ to vanish yields the weak form of the equation $h(\bsx) = \mu_a(\bsx) \phi(\bsx)$. Requiring variations with respect to $\phi$ to vanish yields the following equation: 
\begin{equation}
\label{eq:pre_weak}
\int_{\Omega} \left [ \mu_a \hat{\phi} \tilde{\phi} + \left \langle \frac{1}{3(\mu_a + \mu'_s)} \nabla \hat{\phi}, \nabla \tilde{\phi} \right \rangle - \hat{\phi} \mu_a \tilde{h} \right]  d \bsx + \frac{1}{2} \int_{\partial \Omega} \hat{\phi} \tilde{\phi}  \; d\bsx = 0,
\end{equation}
for any variation $\hat{\phi} \in H^1(\Omega)$ in $\phi$. Requiring variations with respect to $h$ to vanish gives the condition
$$
\left \langle \mathcal{H}_{a} h - \bsd, \mathcal{H}_{a} \hat{h} \right \rangle_{\boldsymbol{\Sigma}_{\mathbf{z}}^{-1}} = \int_{\Omega} \hat{h} \tilde{h}.
$$
for any variation $\hat{h}(\bsx) \in H^1(\Omega)$ of $h$. Since $\hat{h}$ is arbitrary, this implies that 
\begin{equation}
\tilde{h} = \mathcal{H}_a^* \boldsymbol{\Sigma}_{\mathbf{z}}^{-1}( \mathcal{H}_{a} h - \bsd) 
\label{eq:g_form_appendix}
\end{equation}
where $\mathcal{H}_a^*$ is the adjoint of $\mathcal{H}_a$. Inserting the above expression into \eqref{eq:pre_weak}, we then have that 
$$
\int_{\Omega} \left [ \mu_a \hat{\phi} \tilde{\phi} + \left \langle \frac{1}{3(\mu_a + \mu'_s)} \nabla \hat{\phi}, \nabla \tilde{\phi} \right \rangle - \hat{\phi} \mu_a  \mathcal{H}_a^* \boldsymbol{\Sigma}_{\mathbf{z}}^{-1}( \mathcal{H}_{a} h - \bsd)  \right]  d \bsx + \frac{1}{2} \int_{\partial \Omega} \hat{\phi} \tilde{\phi}  \; d\bsx = 0.
$$
Integrating by parts the second term in the above equation, we obtain 
\begin{align*}
&\int_{\Omega} \left [ \mu_a \hat{\phi} \tilde{\phi} + \hat{\phi} \nabla \cdot \left (\frac{1}{3(\mu_a + \mu'_s)}\nabla \tilde{\phi} \right) - \hat{\phi} \mu_a \mathcal{H}_a^* \boldsymbol{\Sigma}_{\mathbf{z}}^{-1}( \mathcal{H}_{a} h - \bsd)  \right]  d \bsx \\
&\quad - \int_{\partial \Omega} \hat{\phi} \frac{1}{3(\mu_a + \mu'_s)} (\nabla \tilde{\phi} \cdot \bseta) \; d\bsx  + \frac{1}{2} \int_{\partial \Omega} \hat{\phi} \tilde{\phi}  \; d\bsx = 0.
\end{align*}
Making arguments regarding the arbitrariness of $\hat{\phi}$, we then obtain the \textit{strong form of the adjoint equation} for this inverse problem, which can be written as 
\begin{align*}
\mu_a \tilde{\phi} + \nabla \cdot \left (\frac{1}{3(\mu_a + \mu'_s)}\nabla\tilde{\phi} \right) -  \mu_a \mathcal{H}_a^* \boldsymbol{\Sigma}_{\mathbf{z}}^{-1}( \mathcal{H}_{a} h - \bsd) = 0 \quad &\text{in } \Omega, \\
 \frac{1}{2} \tilde{\phi} - \frac{1}{3(\mu_a + \mu'_s)} (\nabla \tilde{\phi} \cdot \bseta) = 0 \quad  &\text{on } \partial \Omega. 
\end{align*}
This is a linear PDE with a Robin boundary condition. 

We are now in a position to derive the score of the likelihood function by considering variation of the Lagrangian with respect to $m_1(\bsx)$ and $m_2(\bsx)$.  Specifically, we have that that the Fr{\'e}chet derivative of the log likelihood in the direction $\hat{m}_1(\bsx)$ is the variation of the Lagrangian in the $m_1$ direction, i.e., 
$$
\int_{\Omega} \left [ \dot{\mu}_a \phi \tilde{\phi} \hat{m}_1 - \left \langle \frac{\dot{\mu}_a \hat{m}_1}{3(\mu_a + \mu_s')^2} \nabla \phi, \nabla \tilde{\phi} \right \rangle  - \phi \dot{\mu}_a \tilde{h} \hat{m}_1 \right ] \; d\bsx,
$$
where $\tilde{h}$ is given by \eqref{eq:g_form_appendix}, $\tilde{\phi}$ is obtained by solving the adjoint PDE, and $\dot{\mu}_a$ is the derivative of the pointwise transformation $\mu_a = \overline{\mu_a} e^{m_1}$, i.e., $\dot{\mu}_a = \mu_a$.  Let $\mathcal{G}_1(\bm{m})$ denote the gradient of the log likelihood with respect to $m_1$, defined as the Riesz representer of the Fr{\'e}chet derivative, evaluated at the point $\bm{m} = [m_1, m_2]^T$. Since $\hat{m}_1$ is arbitrary, we have that
$$
\mathcal{G}_1(\bm{m}) = \begin{cases} \dot{\mu}_a \phi \tilde{\phi} - \left \langle \frac{\dot{\mu}_a}{3(\mu_a + \mu_s')^2} \nabla \phi, \nabla \tilde{\phi} \right \rangle  - \phi \dot{\mu}_a \tilde{h}, \quad \text{in} \; \Omega, \\
0 \hspace{6.55cm} \text{on} \; \partial \Omega.
\end{cases}
$$
Similarly, the Fr{\'e}chet derivative of the log likelihood in the direction $\hat{m}_2(\bsx)$ is the variation of the Lagrangian in the $m_2$ direction, i.e., 
$$
\int_{\Omega} \left \langle -\frac{\dot{\mu}_s' \hat{m}_2}{3(\mu_a + \mu_s')^2} \nabla \phi, \nabla \tilde{\phi} \right \rangle  \; d\bsx,
$$
where $\dot{\mu}_s'$ is the pointwise derivative of $\mu_s' = \overline{\mu_s'} e^{m_2}$, i.e., $\dot{\mu}_s' = \mu_s'$. The corresponding gradient is
$$
G_2(\bm{m}) = \begin{cases} \left \langle -\frac{\dot{\mu}_s' \hat{m}_2}{3(\mu_a + \mu_s')^2} \nabla \phi, \nabla \tilde{\phi} \right \rangle   \quad \text{in} \; \Omega, \\
0 \hspace{4cm} \text{on} \; \partial \Omega. 
\end{cases}
$$
Noting that $\nabla_{\bm{m}} \log p(\bsy \mid \bm{m}) = [G_1(\bm{m}), G_2(\bm{m})]^T$ completes the derivation. 
\end{document}